\documentclass{fundam}

\publyear{22}
\papernumber{2102}
\volume{185}
\issue{1}

\usepackage{url} 
\usepackage[ruled,lined]{algorithm2e}
\usepackage{graphicx}

\usepackage{amsmath,amssymb,amsfonts}
\usepackage{enumitem}
\usepackage{svg}
\usepackage{subfig}
\usepackage{xcolor}
\usepackage{mathtools}
\usepackage{trimclip}
\usepackage{pifont}
\usepackage{xcolor}

\newcommand{\emojiHeart}{\includegraphics[height=\fontcharht\font`A]{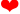} }

\newcommand{\emojiFox}{\includegraphics[height=1.3\fontcharht\font`A]{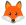} }

\newcommand{\emojiHorse}{\includegraphics[height=1.4\fontcharht\font`A]{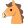} }

\begin{document}

\title{Synthesizing Petri Nets from Labelled Petri Nets\\using Token Trail Regions}

\address{Universitätsstr. 1, 58097 Hagen, Germany}

\author{Robin Bergenthum\\
Fakultät für Mathematik und Informatik \\
FernUniversität in Hagen, Germany \\
robin.bergenthum{@}fernuni-hagen.de
\and Jakub Kovář\\
Lehrgebiet Programmiersysteme \\
FernUniversität in Hagen, Germany \\
jakub.kovar@fernuni-hagen.de } 

\maketitle

\runninghead{R. Bergenthum, J. Kovář}{Synthesizing Petri Nets from Labelled Petri Nets}

\begin{abstract}
Synthesis automatically generates a process model from a behavioural specification. When the target model is a Petri net, we address synthesis through region theory. Researchers have studied region-based synthesis extensively for state-based specifications, such as transition systems and step-transition systems, as well as for language-based specifications. Accordingly, in literature, region theory is divided into two main branches: state-based regions and language-based regions.
Using state-based regions, the behavioural specification is a set of global states and related state-transitions. This representation can express conflicts and the merging of global states naturally. However, it suffers from state explosion and can not express concurrency explicitly.
Using language-based regions, the behavioural specification is a set of example runs defined by partially or totally ordered sets of events. This representation can express concurrency and branching naturally. However, it grows rapidly with the number of choices and can not express merging of conflicts.
So far, synthesis requires a trade-off between these two approaches. Both region definitions have fundamental limitations, and synthesis therefore always involves a compromise.
In this paper, we lift these limitations by introducing a new region theory that covers both state-based and language-based input. We prove that the new definition is a region meta theory that combines both concepts. It uses specifications given as a set of labelled nets, which allow us to express conflicts, concurrency and merging of local states naturally, and synthesizes a Petri net that simulates all labelled nets of the input specification.
\end{abstract}

\begin{keywords} 
Petri nets, Synthesis, Region theory, Token trails, Labelled Petri nets
\end{keywords}

\section{Introduction}

\newcounter{synthABabstract}
\renewcommand{\thesynthABabstract}{(\Alph{synthABabstract})}
\newcommand{\synthABabstract}{\refstepcounter{synthABabstract}\thesynthABabstract}

We model complex systems using Petri nets \cite{DBLP:journals/topnoc/AalstD13,DBLP:conf/apn/DeselJ01,DBLP:books/daglib/0001412,DBLP:books/daglib/0032298}. Petri nets have a formal semantics, an intuitive graphical representation, and the ability to express concurrency among the actions of a system. However, building a Petri net model for a real-world process remains costly and error-prone \cite{DBLP:journals/topnoc/AalstD13,DBLP:conf/icds/MayrKE07}.

Fortunately, when we model a system, we often have access to descriptions or specifications of the intended process behaviour, such as event logs, example runs, or product specifications describing use cases. We can formalize these kinds of behavioural specifications using languages, transition systems, or partially ordered languages. If a specification accurately captures the intended behaviour, we can automatically synthesize a suitable process model.

In the literature on state-based synthsis, synthesis is often concerned with deciding whether there exists a model whose behaviour exactly matches the input specification. See~\cite{DBLP:books/sp/BadouelBD15} for a comprehensive overview. In this setting, synthesis often yields a \textit{no} answer. For language-based synthesis, the synthesis problem is to compute a process model such that:
\synthABabstract\label{synthAabstract} the specification is part of the behaviour of the synthesised model, and 
\synthABabstract\label{synthBabstract} the synthesised model has minimal additional behaviour.

In this contribution, we follow the latter approach and synthesize an upper approximation of the specified behaviour. By treating the perfect-fit check as optional, synthesis always generates a model and enables many more applications in the modelling and process mining communities. 

The theory underlying Petri net synthesis is region theory \cite{DBLP:journals/acta/EhrenfeuchtR90a,DBLP:journals/acta/EhrenfeuchtR90b}. Researchers have studied region theory extensively for transition systems \cite{DBLP:books/sp/BadouelBD15}, languages \cite{DBLP:conf/concur/Darondeau98,DBLP:conf/wsc/LorenzMJ07}, and partial languages \cite{DBLP:journals/fuin/BergenthumDLM08,DBLP:conf/bpm/Bergenthum17}. The literature reports many theoretical results, concepts, case studies, and tool support, including APT \cite{DBLP:conf/ice/BestS15}, Genet \cite{DBLP:conf/acsd/CarmonaCK09}, ProM \cite{DBLP:conf/apn/vanDerWerfDHS08,DBLP:conf/icpm/Bergenthum19}, Viptool \cite{DBLP:conf/apn/BergenthumDLM08}, and I \emojiHeart Petri nets \cite{DBLP:conf/ataed/BergenthumK22}. Researchers have also extended region theory to generalised net classes, such as inhibitor nets \cite{DBLP:journals/fuin/PietkiewiczK02,DBLP:conf/pn/LorenzMB07}, to refined firing semantics, such as interval semantics \cite{DBLP:conf/ceur/PietkiewiczK23}, and recently to approaches that combine both \cite{DBLP:conf/ceur/KoutnyPK24}.

Despite the variety of existing publications, we observe across the literature that region theory follows two main branches. When the input is built from global states, for example a transition system, we apply state-based regions. In this setting, a region is a multiset of the specified states, and a finite set of places is constructed from so-called minimal regions. When the input is built from sets of ordered events, such as an event log, a language, or a partial language, we apply so-called language-based regions. Here, a region is a multiset of tokens produced by prefixes of the language, and valid places are computed by solving an associated integer linear program. To obtain a finite result, either a basis is computed or the concept of wrong continuations is used \cite{DBLP:conf/topnoc/BergenthumDM09}.

Despite their common foundations, the two branches rely on different definitions and algorithms, which forces us to choose one representation of behaviour in advance. In practical applications, this choice restricts us to one of the two techniques and often requires us to commit early to a specific view on system behaviour. As a consequence, both the modelling effort and the quality of the synthesis result strongly depend on how behaviour is captured in the first place. Important aspects such as concurrency, conflict, or state merging may be easier or harder to express depending on the chosen representation. This motivates the search for synthesis approaches that are less sensitive to a particular behavioural representation and that can flexibly integrate different perspectives on system behaviour.

Let us consider a simple workflow-like system as an example. After registration, we ask for information once, twice, or three times. In parallel, the process either saves and applies, or it stops requesting further information. After all required information has been collected and after either \textit{stop} or \textit{apply} has occurred, a final check is performed. A Petri net model of this workflow is shown in Figure~\ref{petriNet}.
We are able to construct such a model because we are experienced modellers and familiar with Petri nets. Nevertheless, the resulting model is far from simple. It contains conflicts, concurrency, a distributed initial marking, a short loop, and arc weights.

\begin{figure}
    \centering
    \includegraphics[width=0.7\linewidth]{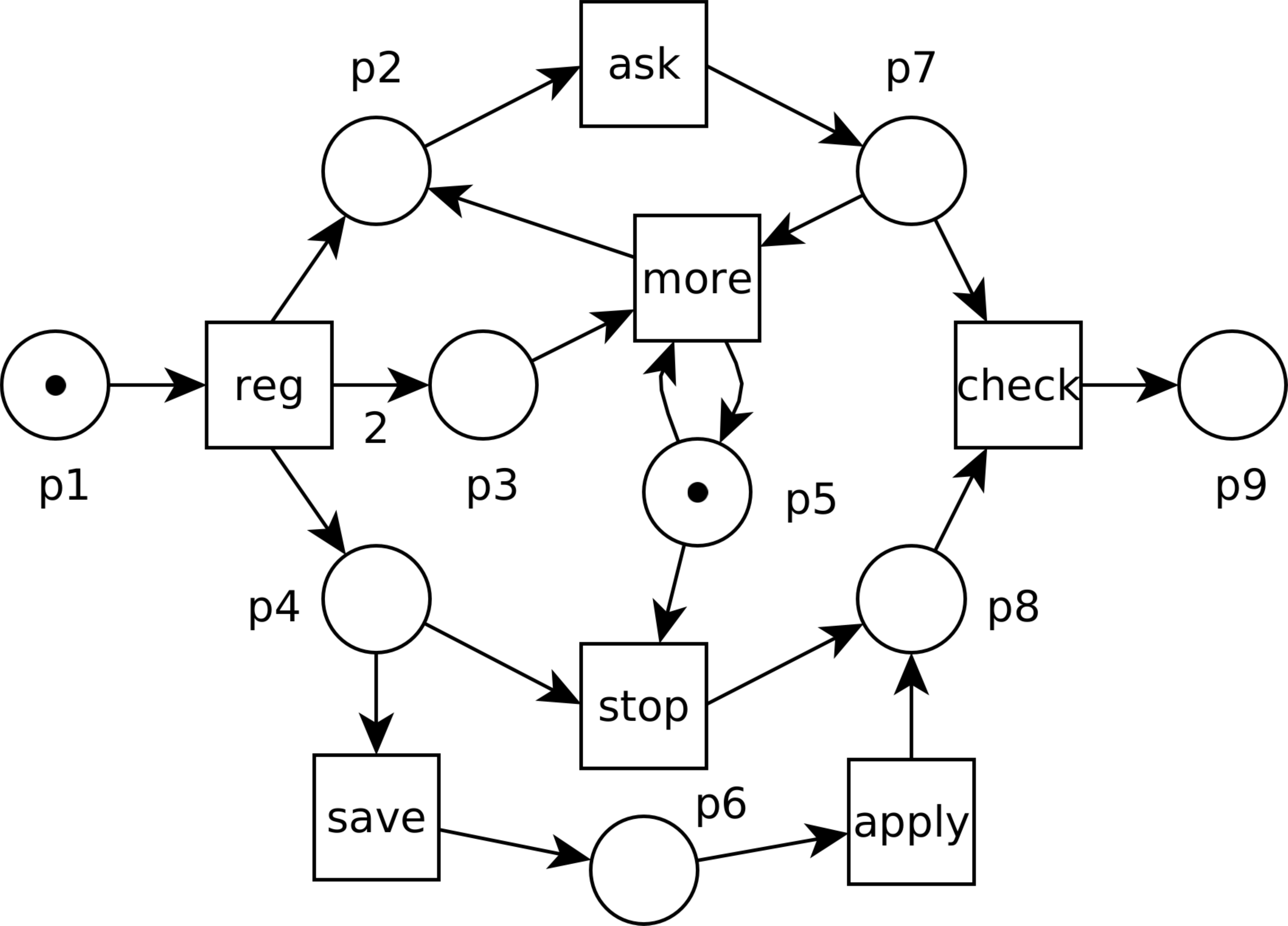}
    \caption{A Petri net model of the example workflow, used as a running example throughout the paper and exhibiting conflict, concurrency, loops, and arc weights.}
    \label{petriNet}
\end{figure}%

In many practical applications, modelling the system from scratch is simply too difficult. An alternative approach is therefore to first specify the intended behaviour of the system and then automatically generate a best-fitting process model using region theory. We assume that modelling behaviour, for example in terms of example runs or single executions, is significantly easier than constructing a complete and fully integrated system model. For the sake of our example, let us now assume that we cannot obtain Figure~\ref{petriNet} directly and instead rely on synthesis techniques to produce this result.

Using state-based region theory, we specify the intended behaviour by modelling the set of reachable states and all related state-transitions. Figure~\ref{stateGraph} shows the corresponding specification of our workflow. Although this model forms a simple state machine, it is difficult to construct because even in this small example the number of states and transitions is already large. Concurrency in the system gives rise to diamond-shaped structures in the state space, which are cumbersome to represent and reason about explicitly. In addition, we must separate states because we have to count the number of occurrences of \textit{ask}. From a modelling point of view, we would prefer to construct Figure~\ref{petriNet} directly rather than model Figure~\ref{stateGraph}.

\begin{figure}[h]
    \centering
    \includegraphics[width=0.8\linewidth]{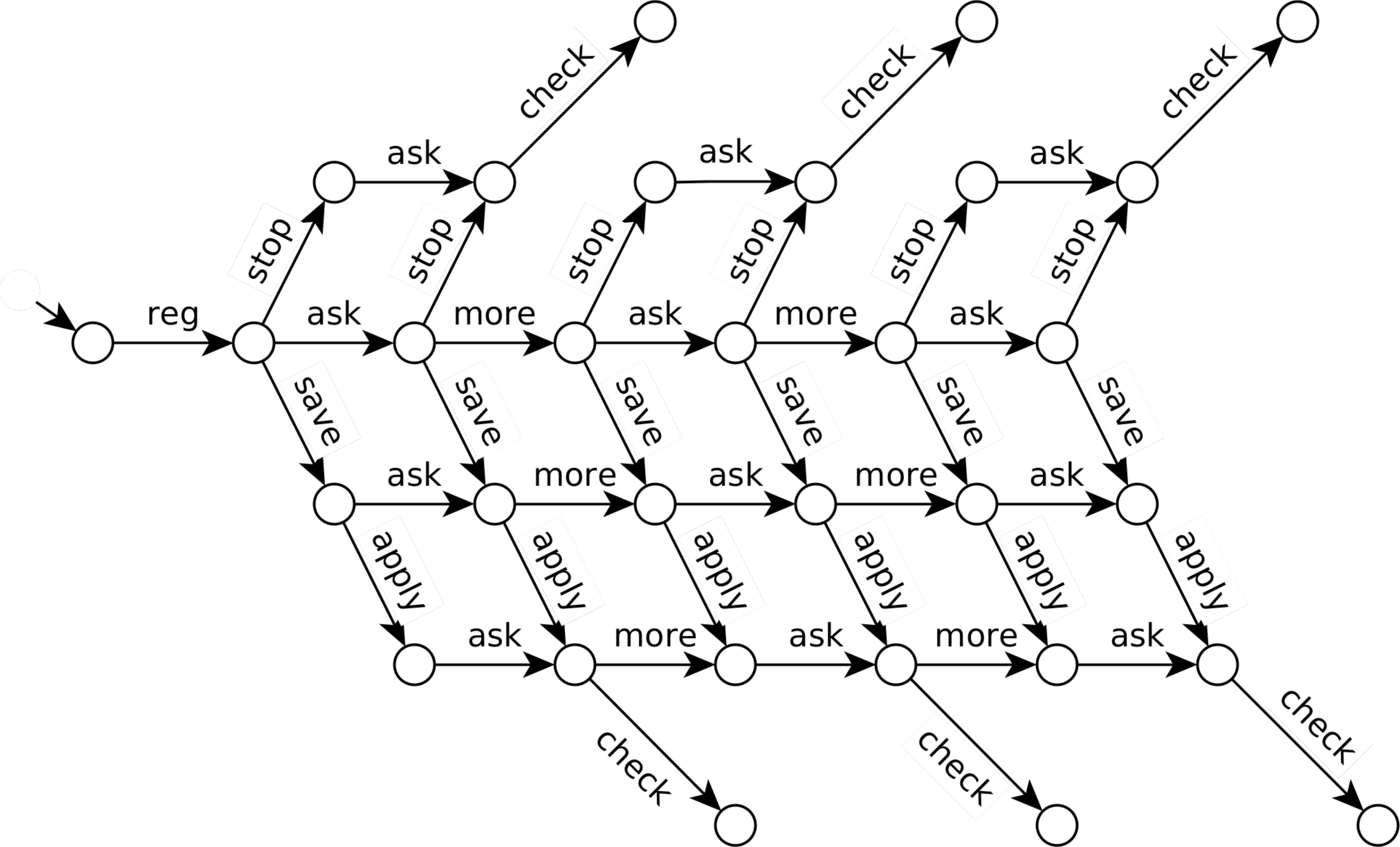}
\caption{A state-based specification of the example workflow, representing all reachable global states and possible state-transitions.}
    \label{stateGraph}
\end{figure}

Using language-based region theory, we specify the intended behaviour by modelling the set of runs of the system. A run is a conflict-free set of action occurrences, called events, together with a causal relation. Each run describes a single execution of the system and cannot express conflict. Figure \ref{pos} shows the corresponding specification of our workflow. Although each run is easy to understand, it is difficult to manage the set of all possible alternative executions. Even in this small example, we must specify six separate runs to capture the behaviour of the workflow. Adding just one more alternative at the start or end of the workflow would double the number of runs. From a modelling perspective, we maybe prefer Figure \ref{pos} over Figure \ref{stateGraph} for this workflow-like example. However, both the size of the transition system and the number of runs grow rapidly even with small increases in system model complexity.

\begin{figure}[h]
    \centering
    \includegraphics[width=\linewidth]{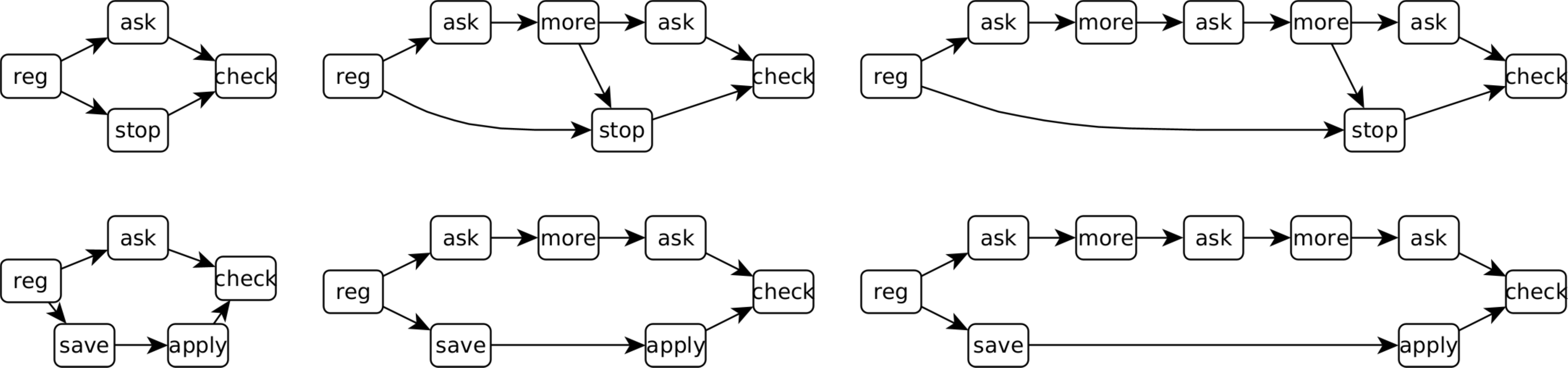}
\caption{A language-based specification of the example workflow, representing individual executions as conflict-free runs.}
    \label{pos}
\end{figure}

Up to this point, if we want to apply synthesis, we must choose either Figure \ref{stateGraph} or Figure \ref{pos} as the starting point. Figure \ref{stateGraph} grows due to concurrency, whereas Figure \ref{pos} grows because conflicts require the enumeration of alternative runs. We cannot combine these two forms of specification and cannot express concurrency and conflicts naturally at the same time.

In this paper, we address this limitation by developing a unifying meta region theory. We design this theory so that it can handle both state-based and language-based specifications. Moreover, we show that this approach can mix and combine state graphs and languages within a single specification. Furthermore, we can also consider net-based input. This is actually the key new idea of our region definition. It is based on labelled Petri nets as a specification language, allowing us to naturally express both conflict and concurrency, while still being able to split labels and build a complete specification from a set of alternative small parts of example behaviours.

Returning to our example, let us assume that we model part of the intended behaviour using the labelled net depicted in Figure~\ref{labeldnet}. The registration enables the activity \textit{ask} and introduces a choice between \textit{stop} and \textit{save}. This choice is not a free choice, since \textit{stop} can only occur after \textit{more} has been executed. As the activity \textit{ask} is performed twice in this execution, the labelled net contains two distinct events labelled \textit{ask}. Choosing between \textit{stop} and \textit{save} does not affect the subsequent behaviour, because both \textit{save} followed by \textit{apply} and \textit{stop} lead to the same local state, in which the transition \textit{check} is enabled.

Figure~\ref{labeldnet} can therefore serve as part of the specification for our workflow example. It describes a set of possible executions with branching and merging of local states, while deliberately modelling only a fragment of the overall behaviour shown in Figure~\ref{petriNet}. Label splitting is used to intuitively specify that the transition \textit{ask} occurs twice in this part of the behaviour. Note that Figure~\ref{labeldnet} is neither a state graph, nor a run, nor a process net, nor a branching process. Instead, it contains conflict, concurrency, and merging of local states. By allowing label splitting and by focusing on a partial description of the intended behaviour, Figure~\ref{labeldnet} remains considerably simpler than the complete process model in Figure~\ref{petriNet}.

\begin{figure}[h]
    \centering
    \includegraphics[
        width=0.8\linewidth,
        trim=0cm 15cm 12cm 11cm,
        clip
    ]{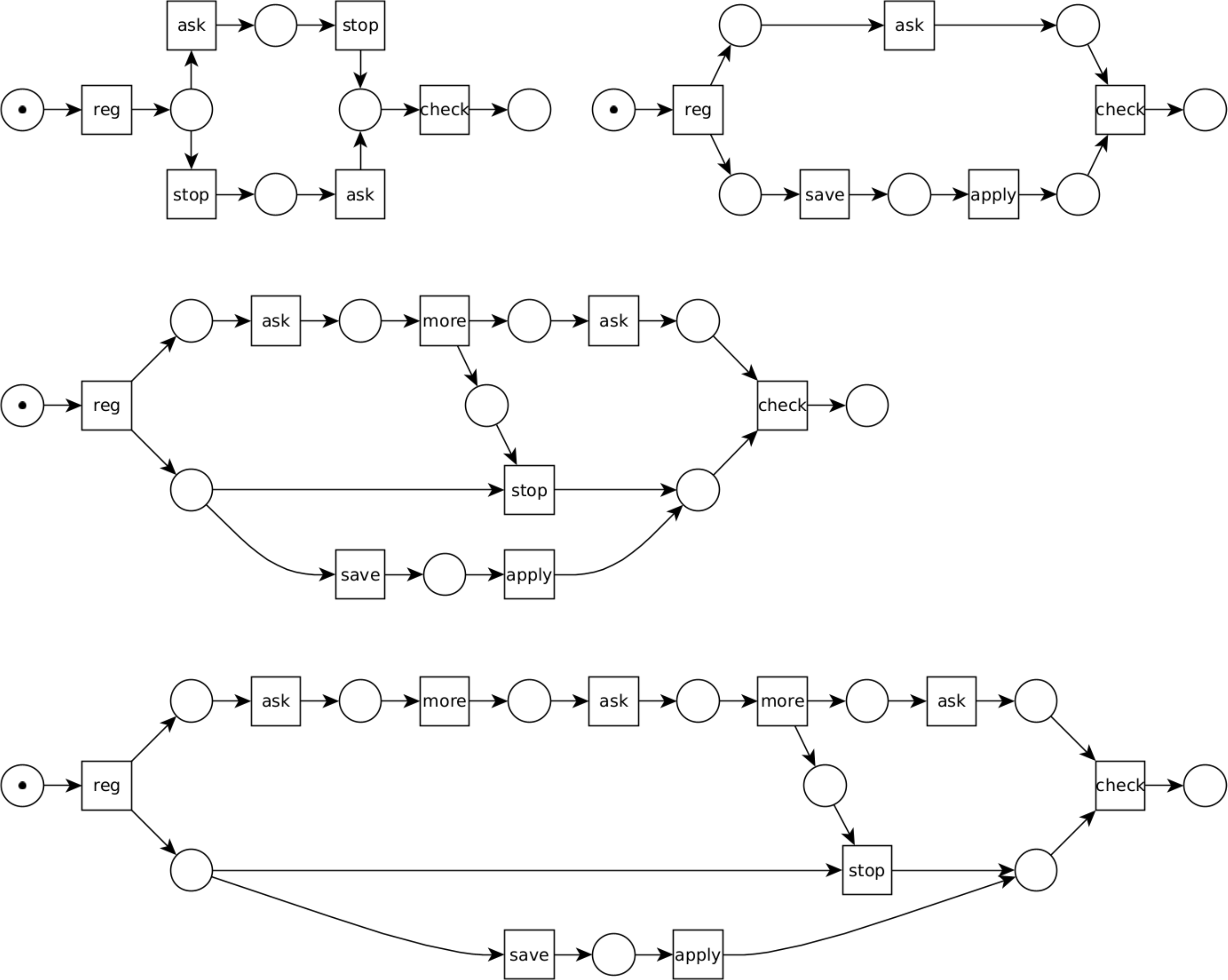}
\caption{A Petri net-based specification of the example workflow, representing part of the intended behaviour using labels and local states, and expressing conflict and concurrency.}
    \label{labeldnet}
\end{figure}

As mentioned above, Figure~\ref{labeldnet} depicts only part of the intended behaviour of our example. Similar to Figure~\ref{pos}, additional alternative runs and executions must be specified if we expect the synthesis result to include them. For this reason, we construct the combined specification shown in Figure~\ref{netSpec}. The first labelled net is a simple state graph, while the second labelled net is a simple marked graph. The third labelled net is taken directly from Figure~\ref{labeldnet}, and the fourth is a similar labelled net in which the action \textit{ask} occurs three times. We argue that this combined specification is significantly easier to devise than the complete state-based specification in Figure~\ref{stateGraph} or the full language-based specification in Figure~\ref{pos}, and it is even simpler than the brute-force approach of constructing the Petri net in Figure~\ref{petriNet} from scratch.

\begin{figure}[h]
    \centering
    \includegraphics[width=.95\linewidth]{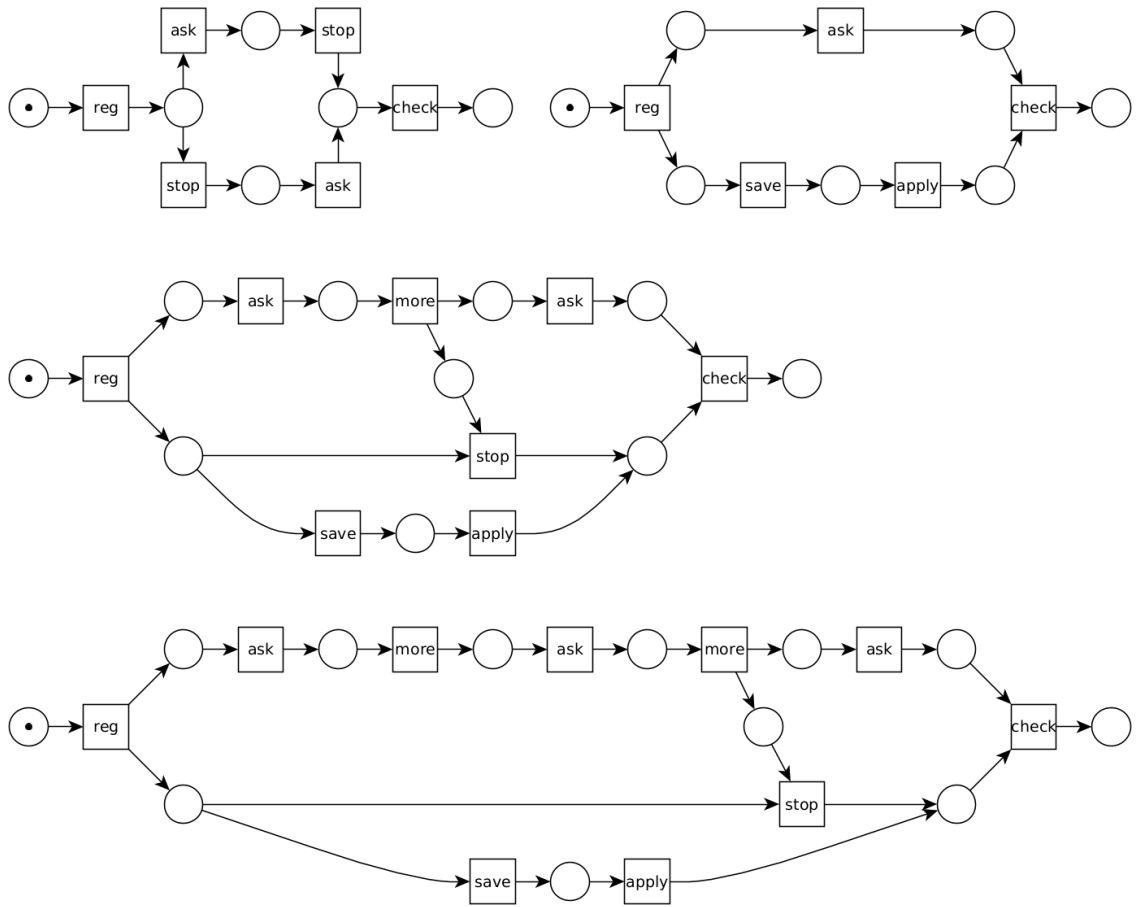}
\caption{A complete specification of the example workflow, combining state-based, language-based, and net-based models.}   \label{netSpec}
\end{figure}

We invite the reader to visit \url{www.fernuni-hagen.de/ilovepetrinets/horse/} and drag the box labelled \textit{Example~1} onto the \emojiHorse\ button. \textit{Example~1} corresponds to the specification shown in Figure~\ref{netSpec}. After a short computation, the \emojiHorse\ tool synthesises the Petri net depicted in Figure~\ref{petriNet}. The only difference is that transition \textit{reg} produces the token in the short loop place, thereby reducing the initial marking. In the \emojiHorse\ interface, the synthesis result is displayed at the top, while the labelled nets of the specification are shown below in four separate tabs.

Let us take a closer look at Figure \ref{netSpec}. The first labelled net is a simple state graph. In fact, it is a one-to-one translation of the diamond in the top left of Figure \ref{stateGraph} into a labelled net. This labelled net contains only conflict and no concurrency. Note that we could translate the entire Figure \ref{stateGraph} into a single large labelled net to obtain a complete specification of the behaviour of our example process. However, this would suffer from the same disadvantages as a purely state-based approach. 
In the \emojiHorse\ tool, this is provided as \textit{Example~2}. If we reload the tool and drag \textit{Example~2} onto the \emojiHorse\ button, we can observe both the increased size of the specification and that the synthesis result is no longer able to identify the short loop place. This is expected, as such a short loop cannot be specified using a purely global state-based specification.

The second labelled net in Figure \ref{netSpec} is a marked graph. More precisely, it is a one-to-one translation of the fourth partially order run in Figure \ref{pos} into a labelled net. This labelled net contains only concurrency and no conflict. Again, we could translate the entire Figure \ref{pos} into six separate conflict-free labelled nets to obtain a complete specification of the behaviour of our example process. However, this would again suffer from the same disadvantages as a purely language-based approach. In the \emojiHorse tool, this is provided as \textit{Example 3}. If we reload the \emojiHorse tool and drag \textit{Example 3} onto the \emojiHorse button, we see that we are now able to synthesize the short loop place again, but we observe a very notable increase in computation time. In \textit{Example 3}, which uses only partial orders, we can neither use conflicts nor merging to keep the input compact.

The third and fourth labelled net of Figure \ref{netSpec} are labelled nets with conflict, concurrency, and merging of local states. In this example, we get the third labelled net by kind of folding the partial orders two and five of Figure \ref{pos}. We get the fourth labelled net by merging the partial orders three and six of Figure \ref{pos}. However, we do not claim that labelled nets are always the better way to specify behaviour in every situation. When state-based modelling is appropriate, we use only state graphs. When language-based modelling is more suitable, we use only marked graphs. At the same time, we can mix these approaches and add labelled nets whenever appropriate. We therefore have more options. Example~4 of the \emojiHorse tool provides an additional example demonstrating that the proposed approach can even handle labelled net input containing loops and arc weights.

Altogether, we argue that a set of labelled nets provides a flexible and well-suited formalism for modelling behavioural specifications of distributed systems. Starting from such a specification, this paper addresses the synthesis problem. Note that, in our setting, we synthesise a Petri net that can simulate~\cite{DBLP:conf/ijcai/Milner71,DBLP:conf/tcs/Park81,DBLP:journals/iandc/JoyalNW96} all labelled nets in the specification, which is crucial to our contribution. In state-based synthesis, the input is a subgraph of the reachability graph of the synthesised Petri net. In language-based synthesis, the input is a subset of the language of the synthesised Petri net. Thus, the underlying semantics differ. Our region definition unifies both views by requiring that the synthesised net can simulate all inputs. That is, there exists a linear mapping from all reachable states of any input labelled net to the reachable states of the synthesis result. This mapping respects the initial states and all step transitions in the state graph of the input. If we restrict the specification to state graphs or to marked graphs, we directly obtain the classical state-based or language-based synthesis formulations. If the input is purely state-based, it is a subgraph of the reachability graph of the result, the simulation mapping is the identity. For language-based synthesis, there is the non-trivial result that step semantics and partial order semantics match \cite{DBLP:books/sp/Vogler92,DBLP:conf/apn/JuhasLD05}. If the model can step as the specification, the input is part of the language of the synthesis result.

Observe that, we require simulation, not bisimulation, because each input net models only a part of the intended behaviour. Thus, requiring each input net to be able to simulate the synthesis result would be a mistake. 

The new region definition builds on the token trail semantics introduced in~\cite{DBLP:conf/ataed/BergenthumK22,DBLP:conf/apn/BergenthumFK23,DBLP:conf/apn/KovarB24}. This paper is an extended version of the conference contribution~\cite{DBLP:conf/apn/BergenthumK25}. Using token trails, we define a corresponding region theory in Chapter~3 and demonstrate that labelled nets provide a natural and compact way to model behavioural specifications. In Chapter~4, we show that this theory is consistent with both state-based and language-based regions and therefore constitutes a new meta region theory. In Chapter~5, we present how to compute a synthesis result using an integer linear program and introduce the corresponding implementation in the \emojiHorse\ tool. In Chapter~6, we discuss applications of the approach, focusing in particular on process discovery.

\section{Preliminaries}

Let $\mathbb{N}$ be the non-negative integers. Let $f$ be a function and $B$ be a subset of the domain of $f$. We write $f|_B$ to denote the restriction of $f$ to $B$. As usual, we call a function $m: A \to \mathbb{N}$ a multiset and write $m = \sum_{a\in A}m(a) \cdot a$ to denote multiplicities of elements in $m$. Let $m': A \to \mathbb{N}$ be another multiset. We write $m \leq m'$ iff $\forall a \in A : m(a) \leq m'(a)$ holds. We model distributed systems by Petri nets \cite{DBLP:journals/topnoc/AalstD13,DBLP:conf/apn/DeselJ01,DBLP:books/daglib/0001412,DBLP:books/daglib/0032298}.

\begin{definition}
    A Petri net is a tuple $(P, T, W)$ where $P$ is a finite set of places, $T$ is a finite set of transitions so that $P \cap T = \emptyset$ holds, and $W: (P \times T) \cup (T \times P) \to \mathbb{N}$ is a multiset of arcs. A marking of $(P, T, W)$ is a multiset $m:P \to \mathbb{N}$. Let $m_0$ be a marking, we call $N = (P, T, W, m_0)$ a marked Petri net and $m_0$ the initial marking of $N$.
\end{definition}

Figure \ref{petriNet} depicts a marked Petri net. Transitions are shown as rectangles, places as circles, the multiset of arcs as a set of weighted arcs, and the initial marking as a set of black dots, referred to as tokens.
For Petri nets, there is a firing rule. Let $t$ be a transition of a marked Petri net $(P, T, W, m_0)$. We denote $\circ t = \sum_{p \in P}W(p,t) \cdot p$ the weighted preset of $t$. We denote $t \circ = \sum_{p \in P}W(t,p) \cdot p$ the weighted postset of $t$. A transition $t$ can fire in marking $m$ iff $m \geq \circ t$ holds. Once transition $t$ fires, the marking of the Petri net changes from $m$ to $m' = m - \circ t + t \circ$.
In our example marked Petri net, transition \textit{reg} can fire in the initial marking. If \textit{reg} fires, it removes one token from $p_1$ and produces a new token in $p_2$, two new tokens in $p_3$, and a new token in $p_4$. In this new marking transitions \textit{ask}, \textit{stop}, and \textit{save} can fire. \textit{reg} is not enabled any more, because there are no more tokens in $p_1$. 

Repeatedly applying the firing rule produces so-called firing sequences. These firing sequences are the most basic behavioural model of Petri nets. For example, the sequence \textit{reg ask stop check} is enabled in the marked Petri net of Figure \ref{petriNet}. The sequence \textit{reg save apply ask check} is another example. Let $N$ be a marked Petri net, the set of all enabled firing sequences of $N$ is the sequential language of $N$.

\newpage
Another formalism used to model the behaviour of a Petri net is the reachability graph. A marking is reachable if there is a firing sequence that produces this marking.

\begin{definition}
    Let $N = (P, T, W, m_0)$ be a marked Petri net. The reachability graph of $N$ is a tuple $(R,T,X,m_0)$, where $R$ is the set of reachable markings of $N$, $T$ is the set of transitions of $N$, and $X \subseteq R \times T \times R$ is a set of labelled state-transitions such that $(m,t,m')$ is in $X$ if and only if $t$ is enabled in $(P,T,W,m)$, and firing $t$ in $m$ leads to the marking $m'$. We call a tuple $(R',T',X',i)$ a state graph enabled in $N$ if there is an injective function $g: R' \to R$, $g(i) = m_0$, $T' \subseteq T$, $\forall (m,t,m') \in X' : (g(m),t,g(m')) \in X$, and for every $m' \in R'$ there is a directed path from $i$ to $m'$ using the elements of $X'$ as arcs. We call the set of enabled state graphs the state language of $N$.
\end{definition}

Roughly speaking, every node of a state graph relates to a reachable state, we don’t have to include all state-transitions and states if every node can be reached from the initial node. Thus, a state graph is kind of a prefix of a reachability graph. 
Figure \ref{stateGraph} depicts a state graph modelling the behaviour of the marked Petri net depicted in Figure 1. The state graph has $31$ states and $43$ state-transitions labelled with transitions of the Petri net. The state graph describes the Petri nets behaviour as follows. At first, we must fire transition \textit{reg}. Then, we have some choices. We can execute \textit{stop}, \textit{ask}, or \textit{save}. We have to count occurrences of the action \textit{ask} and after \textit{stop} there is no \textit{more}. Finally, firing \textit{check} always leads to a deadlock. After firing \textit{check}, there are either two, one, or no tokens in $p_3$, and one or no tokens in $p_5$.
The state language includes firing sequences as the set of all paths through the graph. In this sense, state graphs can merge firing sequences at shared states and may contain loops. However, these graphs are not able to directly express concurrency.
Firing \textit{reg} in the initial marking depicted in Figure \ref{petriNet} leads to the marking $p_2 + 2 \cdot p_3 + p_4 + p_5$. In this marking, transitions \textit{ask} and \textit{stop}, or \textit{ask} and \textit{save} can fire concurrently because they don’t share tokens. Neither firing sequences nor state graphs can express this concurrency. Therefore, there are additional semantics for Petri nets in the literature that can explicitly express concurrency. These include step semantics of Petri nets \cite{DBLP:journals/fuin/Grabowski81}, process net semantics of Petri nets \cite{DBLP:journals/iandc/GoltzR83}, token flow semantics of Petri nets \cite{DBLP:conf/apn/JuhasLD05}, and compact token flow semantics of Petri nets \cite{DBLP:journals/fuin/BergenthumL15}. Fortunately, these semantics are equivalent \cite{DBLP:journals/jip/Kiehn88,DBLP:books/sp/Vogler92,DBLP:conf/apn/JuhasLD05,DBLP:journals/fuin/BergenthumL15} and all define the same partial language. In a partial language, every so-called run is a partially ordered set of events. Obviously, runs can express concurrency and provide an intuitive approach to modelling the behaviour of a distributed system.

Using compact token flow semantics, we can decide if a run is in the partial language of a Petri net in polynomial time \cite{DBLP:conf/apn/Bergenthum21}. Roughly speaking, a compact token flow is a distribution of tokens on the arcs of a run so that every event receives enough tokens, no event must pass on too many tokens, and all events share tokens from the initial marking.

\begin{definition}
    \label{defTokenFlows}
    Let $T$ be a set of labels. A labelled partial order is a triple $(V,\ll,l)$ where $V$ is a finite set of events, ${\ll} \subseteq V \times V$ is a transitive and irreflexive relation, and the labelling function $l: V \to T$ assigns a label to every event. A run is a triple $(V,<,l)$, with a relation ${<} \subseteq V \times V$, iff its irreflexive transitive closure $(V,<^*,l)$ is a labelled partial order. Let $N = (P, T, W, m_0)$ be a marked Petri net and $R = (V,<,l)$ be a run so that $l(V) \subseteq T$ holds. Let $\blacktriangleright,{\scriptstyle\blacksquare} \not \in V$ be two symbols. A compact token flow is a function $x: ((\{\blacktriangleright\}\times V)\cup{<}\cup(V \times\{{\scriptstyle\blacksquare}\})) \to \mathbb{N}$. Let $v \in V$ be an event. We denote $in_x(v) := x(\blacktriangleright,v) + \sum_{v'<v}x(v',v)$ the inflow of $v$, and $out_x(v) := \sum_{v<v'}x(v,v') + x(v,{\scriptstyle\blacksquare})$ the outflow of $v$. We define, $x$ is valid for $p \in P$ iff the following conditions hold:

    \begin{enumerate}[label=(\roman*)]
        \item \label{ctfi} $\forall v \in V: in_x(v) \geq W(p,l(v))$,
        \item \label{ctfii} $\forall v \in V: out_x(v) = in_x(v) + W(l(v),p) - W(p,l(v))$, and
        \item \label{ctfiii} $\sum_{v \in V}x(\blacktriangleright,v) = m_0(p)$.
    \end{enumerate}

    $R$ is enabled in $N$ iff there is a valid compact token flow for every $p \in P$. The set of all enabled runs of $N$ is the partial language of $N$.   
\end{definition}

Figure \ref{pos} depicts six different runs modelling the behaviour of the marked Petri net depicted in Figure \ref{petriNet}. Every run starts with executing transition \textit{reg}. The first run models the concurrent execution of transitions \textit{ask} and \textit{stop} before firing \textit{check}. The second run models the execution of the loop \textit{ask more} before concurrently executing \textit{ask} and \textit{stop}. Transition \textit{stop} can only occur after \textit{more} because of place $p_5$. The third run models two times the loop. Runs four, five, and six model the execution of the loop of \textit{ask} and \textit{more} in parallel to the execution of the sequence \textit{save apply}. Figure \ref{compactFlows} depicts four valid compact token flows for the places $p_1$, $p_2$, $p_3$, and $p_5$ of Figure \ref{petriNet} in the last run of Figure \ref{pos}. For every token flow, the related place defines the number of tokens each event must receive and the number of tokens an event can pass on to later events. A compact token flow is a distribution of tokens over the arcs of the run, respecting the demand and capability of each event. In addition, a compact token flow distributes the initial marking and collects any superfluous tokens to define the final marking.

\begin{figure}[b]
    \centering
    \includegraphics[width=\linewidth]{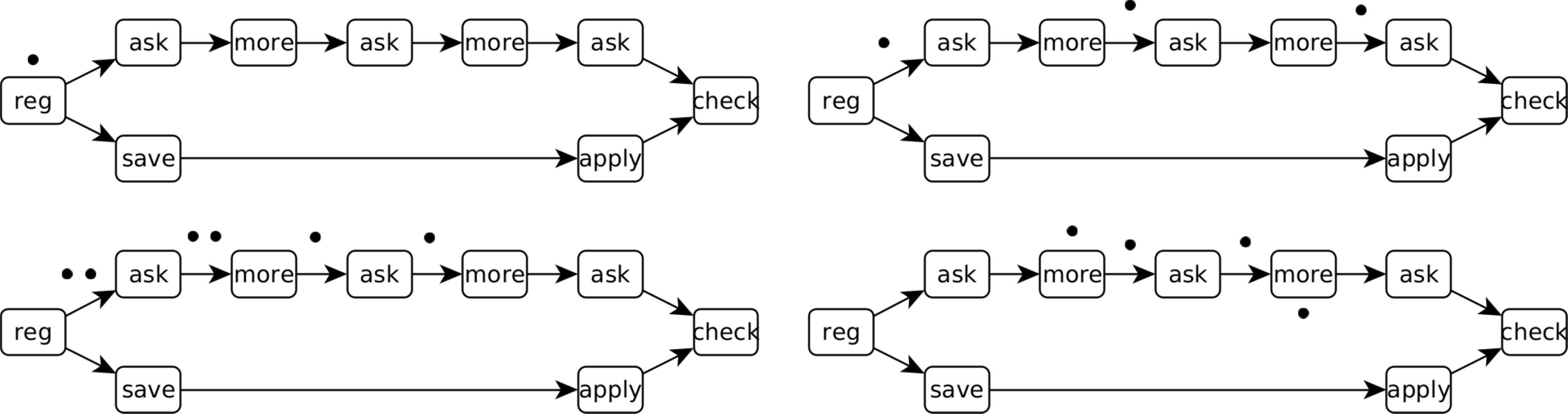}
    \caption{Four valid compact token flows for places $p_1$, $p_2$, $p_3$, and $p_5$ in the last run of Figure \ref{pos}.}
    \label{compactFlows}
\end{figure}

In Figure \ref{compactFlows}, in the first token flow, related to place $p_1$, \textit{reg} consumes from the initial marking. We show this token above the event. In the second token flow, related to place $p_2$, \textit{reg} produces one token for the first \textit{ask}. Then every occurrence of \textit{more} produces a token for the following \textit{ask}. In the third token flow, related to place $p_3$, \textit{reg} produces two tokens, one for every occurrence of \textit{more}. Here we clearly see how the token flow is passed to later events. In the fourth token flow, related to place $p_5$, \textit{more} consumes from the initial marking and produces a token for the second occurrence of \textit{more}. This \textit{more} produces one token for the final marking. We show this token below the event. Figure \ref{pos} has six runs, and Figure \ref{petriNet} has nine places. To show that Figure \ref{pos} is in the partial language of Figure \ref{petriNet}, we need $54$ valid compact token flows.

Altogether, Figure \ref{stateGraph} and Figure \ref{pos} model the behaviour of the Petri net in Figure \ref{petriNet}. Figure \ref{stateGraph} is unable to express concurrency of transitions. Figure \ref{pos} requires six separate runs, because runs cannot contain conflicts. Thus, there is always some trade-off when choosing one semantics over the other.

In the literature, we find other, possibly more advanced techniques for modelling states or executions of Petri nets. Potential candidates include process nets \cite{DBLP:journals/iandc/GoltzR83}, branching processes \cite{DBLP:journals/iandc/GoltzR83,DBLP:books/daglib/0032298}, prime event structures \cite{DBLP:conf/apn/Winskel86}, oclets \cite{DBLP:conf/apn/Fahland09}, and spread nets \cite{DBLP:journals/jlamp/PinnaF20}. These formalisms build upon state graphs and runs but add steps and/or conditions and cuts to enhance the expressiveness of the modelling techniques. However, there is a drawback: as the technique becomes more advanced, modelling behaviour becomes trickier. Specifying all possible steps, combinations of different preconditions, and other factors must be taken into account. Furthermore, by expanding but complicating the semantics, these techniques become just another high-level modelling language to be learned. Although the semantics can often explicitly express conflict and concurrency, they are still not able to intuitively merge local states from different executions. A branching process, for example, is called a branching process because it can only branch, not merge. If we specify behaviour in terms of a branching process, the model will fan out. For this reason, the main application of such techniques is to calculate the complete behaviour of a given model. It is difficult to specify behaviour in terms of these types of structures. To merely illustrate the idea of this problem, Figure \ref{branching} depicts the branching process of Figure \ref{petriNet}. Obviously, we do not want to specify a process in terms of Figure \ref{branching} in order to synthesise Figure \ref{petriNet}.

\begin{figure}[h]
    \centering
    \includegraphics[width=0.85\linewidth]{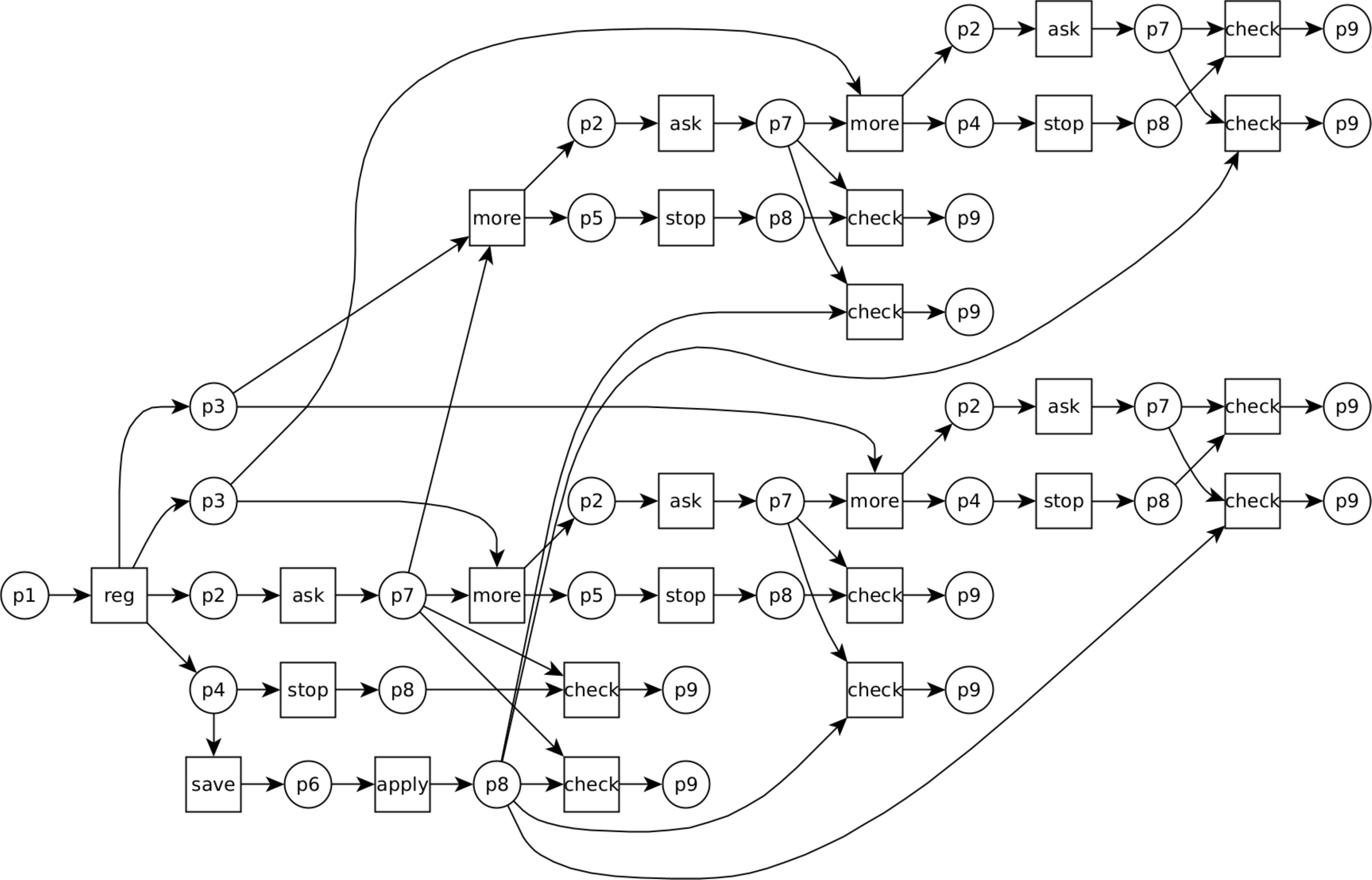}
    \caption{The branching process of Figure \ref{petriNet}.}
    \label{branching}
\end{figure}

To address the problem of intuitively modelling behaviour of a system and to be capable of expressing concurrency, conflict, and merging of local states, we introduced the so-called net language of a Petri net \cite{DBLP:conf/apn/BergenthumFK23,DBLP:conf/apn/KovarB24}. The net language of a Petri net model is the set of all labelled nets such that the model can simulate every labelled net. In our context, simulate means that there is a mapping from every state of the language to every state of the model, respecting the initial marking and preserving the state-transition behaviour of enabled multisets of transitions. Roughly speaking, the model can behave just like the specification. Naturally, the net language also includes all enabled firing sequences, state graphs, runs, and branching processes. For these formalisms there is a simple one-to-one translation into a labelled net. It is not surprising that it is easy to prove that the resulting labelled nets are in the net language as well. On the other hand, the net language is not imprecise. The set of all step sequentializations of all the nets in the net language is the step language of the model. In this sense, there is no additional behaviour in the language that is not justified by the model. Furthermore, we already know how to model in terms of Petri nets, so there is no need to learn yet another modelling language. For details, we refer the reader to \cite{DBLP:conf/apn/BergenthumFK23,DBLP:conf/apn/KovarB24}, and simply state the following formal definition of the net language.
The main idea of this definition is to introduce the linear mapping required to prove simulation and to express this mapping as a marking of the labeled net. Like compact token flows, the marking can be understood as a distribution of tokens over the local places of the specification.

\begin{definition}
    \label{defTokenTrails}
    Let $T$ be a set of labels. A labelled net is a tuple $(C,E,F,i_0,l)$ where $(C,E,F,i_0)$ is a marked Petri net and $l: E \to T$ a labelling function assigning a label to every transition of the labelled net.
    Let $N = (P, T, W, m_0)$ be a marked Petri net and $L = (C,E,F,i_0,l)$ be a marked labelled net. A token trail $x: C \to \mathbb{N}$ is a marking of $L$. Let $e \in E$ be a labelled transition. We denote $in_x(e) := \sum_{(c,e)\in F}F(c,e) \cdot x(c)$ the inflow of tokens to $e$ in $x$, and the weighted sum $out_x(e) := \sum_{(e,c)\in F}F(e,c) \cdot x(c)$ the outflow of tokens from $e$ in $x$. We define, $x$ is valid for $p \in P$ iff the following conditions hold:

    \begin{enumerate}[label=(\Roman*)]
        \item \label{ttI} $\forall e \in E: in_x(e) \geq W(p,l(e))$,
        \item \label{ttII} $\forall e \in E: out_x(e) = in_x(e) + W(l(e),p) - W(p,l(e))$, and
        \item \label{ttIII} $\sum_{c \in C}i_0(c) \cdot x(c) = m_0(p)$.
    \end{enumerate}

    $L$ is enabled in $N$ iff there is a valid token trail for every $p \in P$. The set of all enabled labelled nets of $N$ is the net language $\mathcal{L}(N)$ of $N$.
\end{definition}

\begin{figure}[h]
    \centering
    \includegraphics[width=0.90\linewidth]{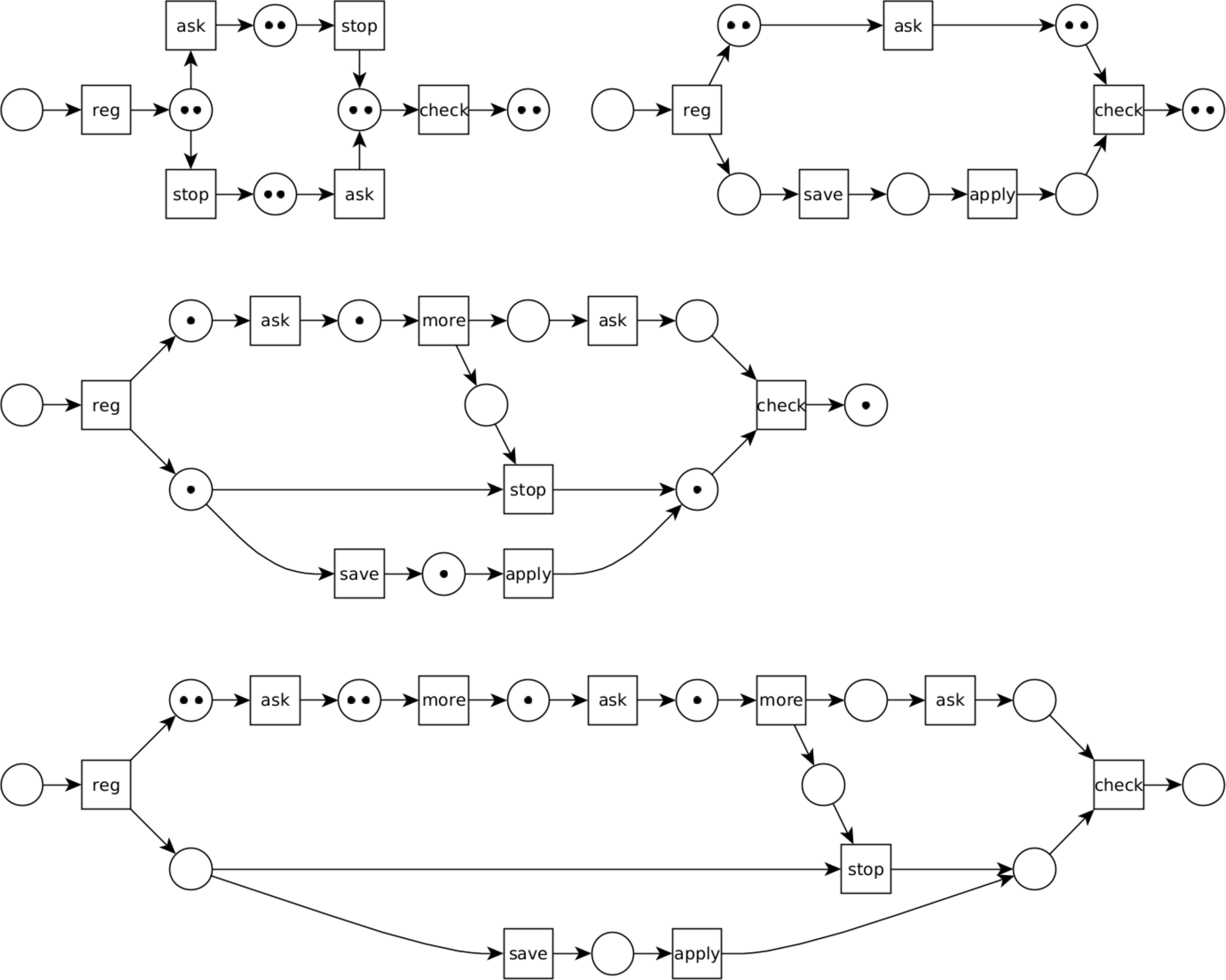}
    \caption{Four valid token trails for $p_3$ of Figure \ref{petriNet} in the four nets of Figure \ref{netSpec}.}
    \label{tokenTrails}
\end{figure}

Figure \ref{tokenTrails} depicts four valid token trails, one in every labelled net of the specification of Figure \ref{netSpec} for the place $p_3$ of Figure \ref{petriNet}. A token trail is a marking and, just like a token flow, describes a valid distribution of tokens respecting the demand and capability of each event. A valid token trail proves that, the respective place of the Petri net is able to simulate the labelled net. In this example, regarding place $p_3$, only transitions labelled \textit{reg} can produce tokens. Thus, in every token trail, the sum of tokens in the postset of transitions labelled \textit{reg}, minus the sum of tokens in the preset of the same transition, is exactly $2$. For every transition labelled \textit{more} there is one less token in the postset than in the preset, because \textit{more} is consuming one token from $p_3$ in the Petri net model. All the other labels do not interact with $p_3$, the sum of ingoing tokens is the same as the sum of outgoing tokens.

If we think of a token trail as a distribution of tokens, in the first net of Figure \ref{tokenTrails}, we see that if there is a conflict, events kind of share tokens. If there is a merge, events agree on the number of tokens passed. In this example, the path \textit{reg ask stop} and the path \textit{reg stop ask} will both provide two tokens. In the second net of Figure \ref{tokenTrails}, we see that if there is a split, events distribute tokens, and if there is a join, events collect tokens. But then again, tokens in a token trail model local states, they don't have to relate to individual tokens of the Petri net. This is why it is much easier to specify behaviour in terms of labelled nets than in terms of process nets or branching processes. Here, we also see that there might be multiple valid token trails for one place because the tokens can move via \textit{ask} or via \textit{save apply}. In nets three and four, there is a mix of splits, joins, conflicts, and merges of tokens. In the third net, one of the tokens is passed to and consumed by \textit{more}. In the fourth net, every \textit{more} receives and consumes one token produced by \textit{reg}.

To prove that the four nets of Figure \ref{netSpec} are in the net language of Figure~\ref{petriNet}, we also have to find valid token trails for the other eight places of Figure~\ref{petriNet}. Here, we refer the reader to our web tool at \url{www.fernuni-hagen.de/ilovepetrinets/fox/}, where we have prepared every labbeled net of this example to be dragged to the \emojiFox icon. Clicking on places in the Petri net will display the related valid token trails in the labelled nets.
For more examples and proofs that token trails indeed cover the state language and the partial language, and that every net of the net language can be simulated by the Petri net, we refer the reader to \cite{DBLP:conf/apn/BergenthumFK23,DBLP:conf/apn/KovarB24}. Note that the labelled nets in a net language may also contain loops, arc weights, and distributed initial markings.

\section{Token Trail Regions}

\newcounter{synthAB}
\renewcommand{\thesynthAB}{(\Alph{synthAB})}
\newcommand{\synthAB}{\refstepcounter{synthAB}\thesynthAB}

In this section, we address the synthesis problem for a set of labelled nets. The synthesis problem is computing a Petri net from a set of labelled nets such that: 
\synthAB\label{synthA} every labelled Petri net is in the net language of the synthesised Petri net, and \synthAB\label{synthB} the generated model has minimal additional behaviour.
Like mentioned above, to solve this problem. we strictly follow the general ideas of region theory. 

\begin{definition}\label{defRegion}
    Let $S = \{(C_1,E_1,F_1,i_1,l_1),\ldots,(C_n,E_n,F_n,i_n,l_n)\}$ be a set of marked labelled nets. A token trail region is a marking $r:\bigcup_i C_i \to \mathbb{N}$ iff the following two conditions hold:

    \begin{enumerate}[label=(\Roman*)]
        \setcounter{enumi}{3}
        \item \label{regionIV} $\forall \nu, \mu \in \mathbb{N}, e \in E_\nu, e' \in E_\mu, l_\nu (e) = l_\mu(e'): out_r(e) - in_r(e) = out_r(e') - in_r(e')$ and
        \item \label{regionV} $\forall \nu, \mu \in \mathbb{N}: \sum_{c\in C_\nu}i_\nu(c)\cdot r(c) = \sum_{c'\in C_\mu}i_\mu(c')\cdot r(c')$.
    \end{enumerate}

    For every labelled transition $e$, we call $out_r(e) - in_r(e)$ the rise of $e$. Using this notion, property \ref{regionIV} ensures: same label, same rise. For every labelled net, we call $\sum_{c\in C_\nu}i_\nu(c)\cdot r(c)$ the initial sum of tokens. Property \ref{regionV} ensures: all nets have the same initial sum of tokens.
\end{definition}

Obviously, the goal of Definition \ref{defRegion} is to define a distribution of tokens on a set of labelled nets such that equally labelled transitions have the same effect on the overall distribution (same label, same rise). If this is the case, for each such distribution, we can directly define a related place in the Petri net to be constructed, such that the arc weights connecting this place to all the transitions directly stem from the effect of every label on the overall distribution. The initial sum of tokens directly defines the initial marking of such a valid place. By construction, the region is a valid token trail for such a place. Thus, adding only places derived from regions to the synthesis result will guarantee \ref{synthA}.

\begin{theorem}
\label{thmOnePlace}
Let $S = \{(C_1,E_1,F_1,i_1,l_1),\ldots,(C_n,E_n,F_n,i_n,l_n)\}$ be a set of marked labelled nets, let $r$ be a region in $S$. We denote $E = \bigcup_i E_i$ and $l = \bigcup_i l_i$. For every $t \in \{l(e) | e \in E\}$ we fix one $e_t \in E$ so that $l(e_t) = t$ and $in_r(e_t) = min\{in_r(e) | l(e) = t\}$.
The Petri net $N = (P,T,W,m_0)$, where $P = \{p\}$, $T = \{l(e) | e \in E\}$, $W = \sum_{t \in T}in_r(e_t) \cdot (p,t) + \sum_{t \in T} out_r(e_t)\cdot(t,p)$, and $m_0(p) = \sum_{c \in C_1} i_1(c) \cdot r(c)$, is well-defined. For every $(C_\nu,E_\nu,F_\nu,i_\nu,l_\nu) \in S$, $r|_{C_\nu}$ is a valid token trail for $p$ of $N$. Thus, $S \subseteq \mathcal{L}(N)$ holds.
\end{theorem}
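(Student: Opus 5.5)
The plan is to verify the three parts of the statement in turn: well-definedness of $N$, validity of each restricted token trail against conditions \ref{ttI}--\ref{ttIII}, and then the inclusion $S \subseteq \mathcal{L}(N)$.

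\textbf{Well-definedness.} First I check that $N$ is a Petri net. $P=\{p\}$ and $T$ are finite, and we may assume $p\notin T$. Since $r$ is a marking and all $F$-weights are natural numbers, $in_r(e_t)$ and $out_r(e_t)$ lie in $\mathbb{N}$, so $W$ is a multiset of arcs. The minimum defining $e_t$ exists because $E$ is finite and nonempty for each label. Although the choice of $e_t$ may not be unique, $in_r(e_t)$ is the minimum and hence fixed. By \ref{regionIV} the rise is label-invariant, so $out_r(e_t)=in_r(e_t)+\mathrm{rise}(t)$ is fixed as well. Finally, $m_0(p)\in\mathbb{N}$.

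\textbf{Validity of $r|_{C_\nu}$.} Fix $\nu$ and put $x=r|_{C_\nu}$. For $e\in E_\nu$, the quantities $in_x(e)$ and $out_x(e)$ only involve places adjacent to $e$, and these lie in $C_\nu$. Hence $in_x(e)=in_r(e)$ and $out_x(e)=out_r(e)$. I tacitly treat the $C_\nu$ as disjoint, as the union in Definition~\ref{defRegion} implicitly does.

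Condition~\ref{ttI} follows from the choice of $e_t$: with $t=l(e)$ we have $W(p,t)=in_r(e_t)=\min\{in_r(e')\mid l(e')=t\}\le in_r(e)$.

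Condition~\ref{ttII} needs $out_r(e)-in_r(e)=W(t,p)-W(p,t)=out_r(e_t)-in_r(e_t)$, which is exactly \ref{regionIV} applied to $e$ and $e_t$. The one subtle point is that $W(p,t)$ and $W(t,p)$ are separate arc weights, and we do not cancel them into a net effect. Identity~\ref{ttII} is nevertheless stated as a difference, so it holds.

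Condition~\ref{ttIII} requires $\sum_{c\in C_\nu}i_\nu(c)x(c)=m_0(p)$. This follows from \ref{regionV} with $\mu=1$.

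\textbf{Conclusion.} Since $N$ has the single place $p$, each $L_\nu$ has a valid token trail for every place of $N$, so $L_\nu\in\mathcal{L}(N)$ and therefore $S\subseteq\mathcal{L}(N)$.

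\textbf{Main obstacle.} The only real difficulty is condition~\ref{ttI}. It is what forces the minimal-inflow choice of $e_t$: choosing an arbitrary representative could make $W(p,t)$ exceed the inflow of some equally labelled event. The remaining steps are direct unfoldings of the definitions.
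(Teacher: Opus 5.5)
Your proof is correct and follows the paper's argument essentially step by step. Condition \ref{ttI} comes from the minimal-inflow choice of $e_t$, condition \ref{ttII} from \ref{regionIV} applied to $e$ and $e_t$, and condition \ref{ttIII} from \ref{regionV} applied with the first net. Your explicit check that $N$ does not depend on which minimiser $e_t$ is chosen (equal minimal inflow together with label-invariant rise) is a welcome addition, since the paper asserts well-definedness without arguing it.
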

\begin{proof}
Fix some $\nu$, we prove $r_\nu := r|_{C_\nu}$ is a valid token trail in $(C_\nu,E_\nu,F_\nu,i_\nu,l_\nu)$ for $p$ of $N$.
\noindent
$\forall e \in E_\nu : W(p,l(e)) = in_r(e_{l(e)}) \leq in_{r_\nu}(e)$ so that \ref{ttI} holds.
\noindent
$\forall e \in E_\nu : W(l(e),p) - W(p,l(e)) = out_r(e_{l(e)}) - in_r(e_{l(e)}) = rise_r(e_{l(e)}) = rise_{r_\nu}(r)$ so that \ref{ttII} holds.
\noindent
$m_0(p) = \sum_{c \in C_1} i_1(c) \cdot r(c) = \sum_{c \in C_\nu} i_\nu(c) \cdot r_\nu(c)$ so that \ref{ttIII} holds as well.
The region $r$ is a valid token trail in every labelled net of $S$ for $p$ of $N$. Thus, $S \subseteq \mathcal{L}(N)$ holds.
\end{proof}

\begin{figure}[h]
    \centering
    \includegraphics[width=0.9\linewidth]{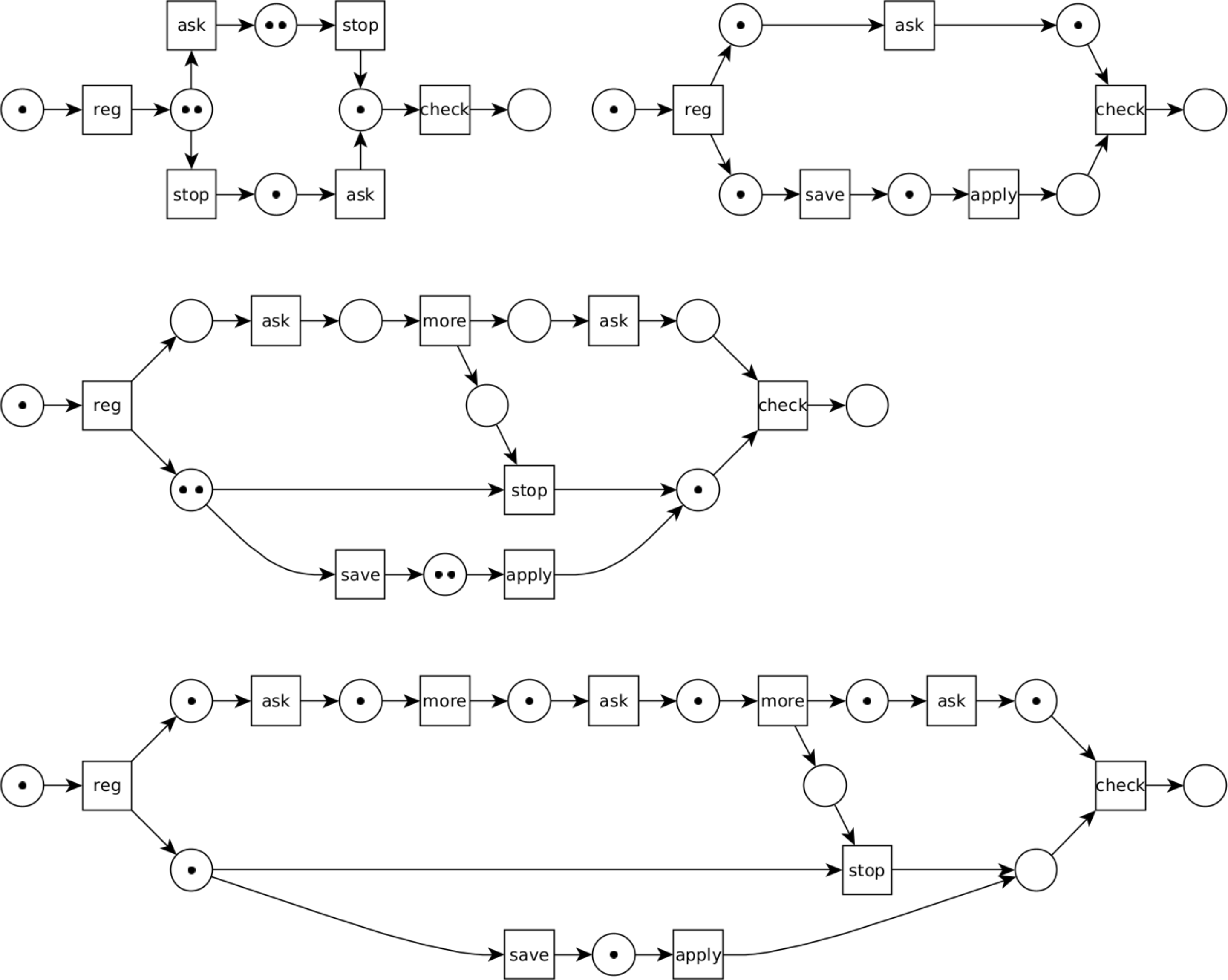}
    \caption{A token trail region in Figure \ref{netSpec}.}
    \label{region}
\end{figure}

\begin{figure}[h]
    \centering
        \includegraphics[
        width=0.75\linewidth,
        trim=0cm .5cm 0cm 1cm,
        clip
    ]{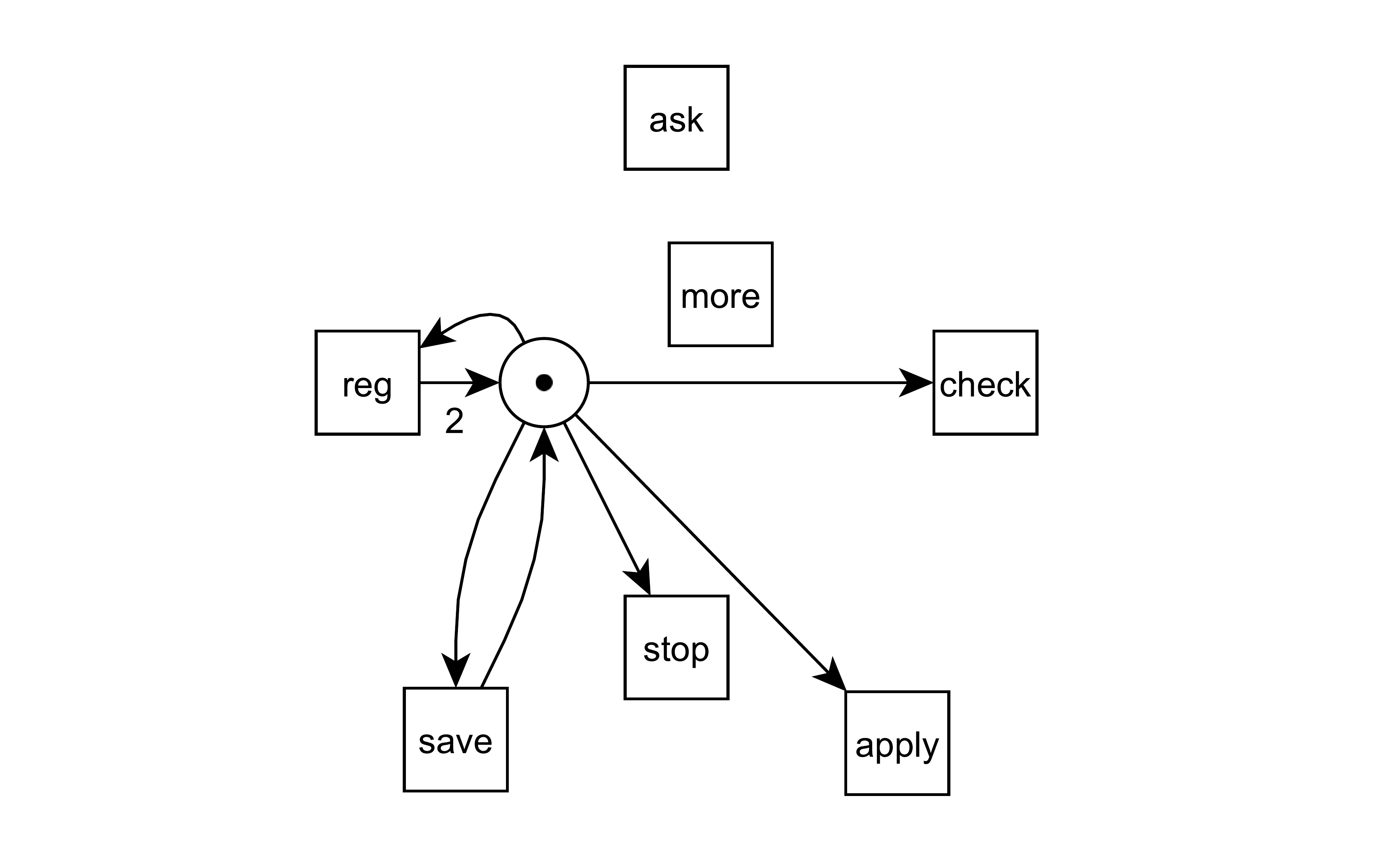}
    \caption{One-place Petri net defined by the region of Figure \ref{region}.}
    \label{onePlace}
\end{figure}

Theorem \ref{thmOnePlace} shows that whenever we have a region, we can construct a related one-place Petri net so that the region is a valid token trail for the place and thus, the net language includes the specification. Figure \ref{region} depicts markings in our specification so that \ref{regionIV} and \ref{regionV} hold. All initially marked places of Figure \ref{netSpec} have one token each. All transitions labelled \textit{reg} have rise $1$. All transitions labelled \textit{stop}, \textit{apply} or \textit{check} have rise $-1$. All other transitions have rise $0$.

Following Theorem \ref{thmOnePlace}, this region is a valid token trail for the place depicted in Figure \ref{onePlace}. The place is initially marked by one token and firing \textit{reg} will consume this token before producing two tokens in this place. Firing \textit{stop}, \textit{apply}, or \textit{check} will consume one token. Transition \textit{save} has a short loop to this place, and all other transitions are simply not connected. Adding this place to Figure \ref{petriNet}, the net language of the resulting net will still include Figure \ref{netSpec}.

\begin{corollary}
\label{corNetUnion}
Let $S$ be a set of labelled nets and $R$ be a set of regions in $S$. As defined in Theorem \ref{thmOnePlace}, every region $r \in R$ defines a one-place Petri net $N_r$. Let $N$ be the union of all Petri nets $N_r$, $S \subseteq \mathcal{L}(N)$ holds because of Definition \ref{defTokenTrails}.
\end{corollary}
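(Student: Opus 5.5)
Since $N$ is the union of the one-place nets $N_r$, I would first make the union precise: every $N_r$ has the same transition set $T = \{l(e) \mid e \in E\}$ and its own fresh place $p_r$. So $N = (P,T,W,m_0)$ with $P = \{p_r \mid r \in R\}$, $W$ restricted to arcs at $p_r$ equal to $W_r$, and $m_0(p_r) = m_{0,r}(p_r)$. I would stress that the places are kept distinct, even when two regions induce identical arc weights. Then $W$ and $m_0$ are well-defined, since no arc or marking value is assigned twice. The transition set of every $N_r$ coincides with $T$, so the union adds no transitions.

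The key observation is that Definition~\ref{defTokenTrails} is \emph{place-wise}. A labelled net $L$ is enabled in $N$ iff for every $p \in P$ there is a valid token trail for $p$. Conditions \ref{ttI}--\ref{ttIII} for a fixed place $p$ refer only to $W(p,\cdot)$, $W(\cdot,p)$, $m_0(p)$ and the labelled net itself. Now fix $L_\nu \in S$ and a place $p_r \in P$. The quantities $W(p_r,t)$, $W(t,p_r)$ and $m_0(p_r)$ in $N$ are literally those of $N_r$. So a token trail is valid for $p_r$ in $N$ iff it is valid for $p_r$ in $N_r$.

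By Theorem~\ref{thmOnePlace}, $r|_{C_\nu}$ is a valid token trail for $p_r$ of $N_r$, hence also for $p_r$ of $N$. Since this holds for every place of $N$, $L_\nu \in \mathcal{L}(N)$, and so $S \subseteq \mathcal{L}(N)$.

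The only point needing care, rather than a real obstacle, is the bookkeeping of the union. Places must be disjoint and transition sets shared, so that no arc weight or initial marking of a place is altered by another component. If $R$ is empty, $N$ has no places and every labelled net is trivially enabled. That case also deserves a remark.
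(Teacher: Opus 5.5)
Your proposal is correct and takes essentially the same approach as the paper's one-line justification: enabledness in Definition~\ref{defTokenTrails} is checked place by place, and Theorem~\ref{thmOnePlace} supplies $r|_{C_\nu}$ as a valid token trail for each place $p_r$ of $N$; your explicit bookkeeping (disjoint places, shared transition set $T$, the vacuous case $R=\emptyset$) makes precise what the paper leaves implicit. One remark, which the paper also glosses over, is that for infinite $R$ the union is not a Petri net in the strict sense of the preliminaries, since $P$ must be finite there, but your place-wise argument is unaffected.
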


\newpage
Corollary \ref{corNetUnion} provides the first puzzle piece to solve the synthesis problem. We add only places related to regions to our synthesis result to ensure \ref{synthA}. In the next step, we show that we never have to add a place that is not defined by a region. But first, we have to talk about short loops. Figure \ref{shortLoop} highlights the idea of the next theorem. In the first row, there is a Petri net and a labelled net of the net language of the Petri net. Thus, there is a valid token trail for $p$, which is depicted in the second row of the figure. This token trail naturally forms a region defining the Petri net of the second row with matching arc weights. Every place defined by a region maximizes arc weights but still the token trail for $p$ is also a valid token trail for $p'$. $p'$ is just more restrictive than $p$, and there is no need to add places not directly defined by regions.

\begin{figure}[h]
    \centering
    \includegraphics[width=0.7\linewidth]{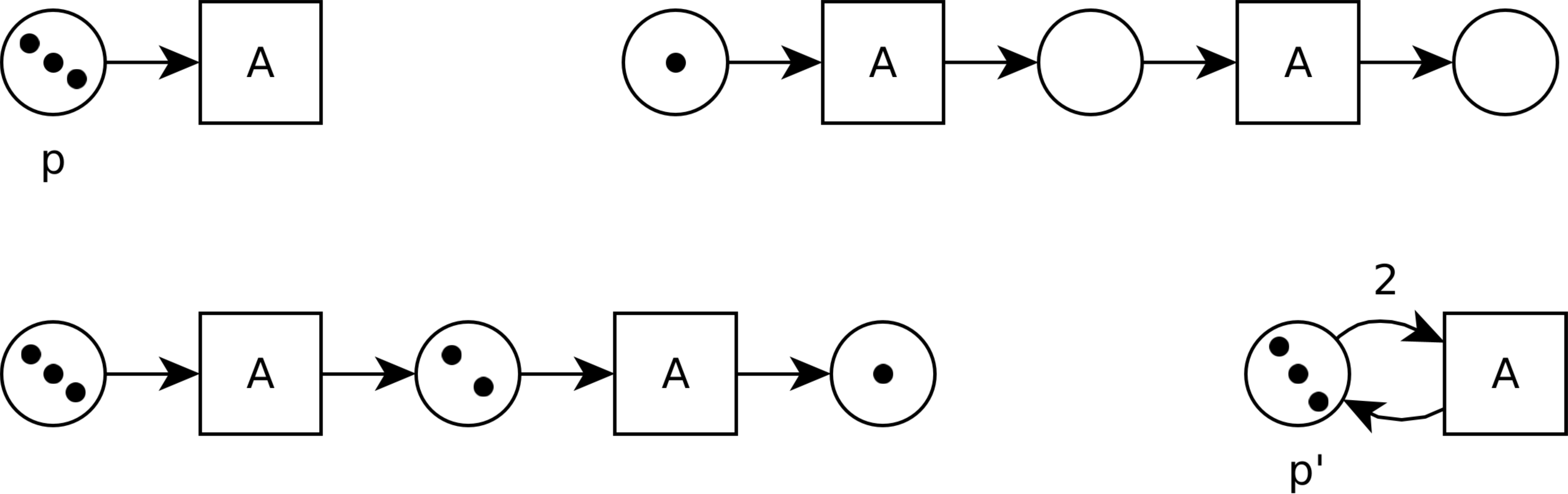}
    \caption{A Petri net, a marked labelled net, a valid token trail, and a one-place Petri net.}
    \label{shortLoop}
\end{figure}

\begin{theorem}
\label{thmMostRestrictive}
Let $S = \{(C_1,E_1,F_1,i_1,l_1),\ldots,(C_n,E_n,F_n,i_n,l_n)\}$ be a set of marked labelled nets. Let $N = (\{p\},T,W,m_0)$ be a one-place Petri net. If $S \subseteq \mathcal{L}(N)$ holds, there is a region $r$ in $S$ defining a one-place net $N' = (\{p'\},T,W',m'_0)$ so that $S \subseteq \mathcal{L}(N') \subseteq \mathcal{L}(N)$ holds.
\end{theorem}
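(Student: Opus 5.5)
The plan is to build the region directly from the token trails that witness $S \subseteq \mathcal{L}(N)$, and then show that the one-place net it defines is at least as restrictive as $N$.

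\textbf{Step 1: assemble the region.} Since $S \subseteq \mathcal{L}(N)$, every labelled net $(C_\nu,E_\nu,F_\nu,i_\nu,l_\nu)$ has a valid token trail $x_\nu$ for $p$. I assume, as the paper implicitly does, that the place sets $C_\nu$ are pairwise disjoint, and define $r:\bigcup_i C_i\to\mathbb{N}$ by $r|_{C_\nu}=x_\nu$.

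\textbf{Step 2: check that $r$ is a region.} By condition \ref{ttII}, the rise of every $e\in E_\nu$ is $W(l(e),p)-W(p,l(e))$. This value depends only on the label, so \ref{regionIV} holds. By condition \ref{ttIII}, the initial sum of tokens of every net equals $m_0(p)$, so \ref{regionV} holds. Theorem~\ref{thmOnePlace} then yields the one-place net $N'=(\{p'\},T',W',m'_0)$ with $S\subseteq\mathcal{L}(N')$. Here $T'$ consists of the labels occurring in $S$. For each such $t$ we have $W'(p',t)=\min\{in_r(e)\mid l(e)=t\}$ and $W'(t,p')=W'(p',t)+W(t,p)-W(p,t)$, and $m'_0(p')=m_0(p)$. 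Transitions of $T$ that do not occur in $S$ I attach with exactly the arc weights they have in $N$, so that $N'$ is defined over the full set $T$.

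\textbf{Step 3: compare the arc weights.} By condition \ref{ttI}, $in_r(e)\ge W(p,l(e))$ for every $e$, hence $W'(p',t)\ge W(p,t)$. Both places have the same effect $W(t,p)-W(p,t)$ for every $t$, and both start with the same initial marking. So $N'$ differs from $N$ only by extra short-loop weight $\delta_t := W'(p',t)-W(p,t)\ge 0$, added on both the pre-arc and the post-arc of $t$.

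\textbf{Step 4: show $\mathcal{L}(N')\subseteq\mathcal{L}(N)$.} Take any labelled net $L$ and a valid token trail $y$ for $p'$ in $L$. I show that the same $y$ is a valid token trail for $p$:
\begin{itemize}
\item condition \ref{ttI} holds because $in_y(e)\ge W'(p',l(e))\ge W(p,l(e))$;
\item condition \ref{ttII} holds because the rise required by $p'$ equals the rise required by $p$;
\item condition \ref{ttIII} holds because $m'_0(p')=m_0(p)$.
\end{itemize}
Hence every $L\in\mathcal{L}(N')$ also lies in $\mathcal{L}(N)$.

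\textbf{Main obstacle.} The comparison needs the one-place net to be able to relate all transitions of $T$. Theorem~\ref{thmOnePlace} only defines arcs for labels that occur in $S$; this gap is exactly what the padding in Step 2 handles. The other delicate point is the implicit assumption in Step 1 that the place sets of the different labelled nets are disjoint, so that $r$ is well defined.
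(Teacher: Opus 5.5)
Your proposal is correct and takes the same route as the paper's proof. You glue the valid token trails for $p$ into a region using \ref{ttII} and \ref{ttIII}. You then use \ref{ttI} to see that the construction of Theorem~\ref{thmOnePlace} only enlarges short-loop weights while keeping every rise and the initial marking, so any valid token trail for $p'$ is also one for $p$. Two of your steps make explicit points the paper passes over silently: padding the labels of $T$ that do not occur in $S$ (the paper simply writes $N'$ over the full $T$), and assuming the place sets $C_\nu$ are disjoint.
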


\begin{proof}
If $S \subseteq \mathcal{L}(N)$ holds, there is a valid token trail for $p$ in every labelled net in $S$. Together, these markings form a region $r$ in $S$ because: \ref{ttIII} holds for every labelled net so that \ref{regionV} holds as well. \ref{ttII} holds for all the labelled transitions so that \ref{regionIV} holds as well. Because of \ref{ttI}, in every token trail, every labelled transition $t$ has at least $W(l(t),p)$ ingoing tokens, but there might be more. According to Theorem \ref{thmOnePlace}, we construct the related one-place net $(\{p'\},T,W',m'_0)$. To define $W'$. for every $t$, we take the minimal sum of ingoing tokens of all transitions labelled $t$. Thus, for every $t$, $W(p',l(t)) \geq W(p,l(t))$ holds. $r$ is a valid token trail for $p'$ (Theorem \ref{thmOnePlace}). Thus, every valid token trail for $p'$ is also a valid token trail for $p$ so that $S \subseteq \mathcal{L}(N') \subseteq \mathcal{L}(N)$ holds. Place $p'$ is just $p$ where short loops are maximized to get the most restrictive place.
\end{proof}

By combining Theorem \ref{thmOnePlace} and Theorem \ref{thmMostRestrictive}, we conclude that to solve the synthesis problem, we calculate regions and combine the corresponding one-place nets. We never have to add a place not defined by a region. 

\begin{corollary}
Let $S$ be a set of labelled nets and $R$ be a set of regions in $S$. Each region $r$ in $R$ defines a one-place Petri net $N_r$. The possibly infinite union net $N$ of the one-place Petri nets $N_r$ solves the synthesis problem.
\end{corollary}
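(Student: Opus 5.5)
We read the corollary with $R$ taken to be the set of \emph{all} regions in $S$. Under that reading, ``solves the synthesis problem'' means two things: property \ref{synthA}, namely $S \subseteq \mathcal{L}(N)$, and property \ref{synthB}, namely that no Petri net whose net language contains $S$ has a strictly smaller net language than $N$. The plan is to obtain \ref{synthA} from Corollary~\ref{corNetUnion} and \ref{synthB} from Theorem~\ref{thmMostRestrictive}, using one basic fact: validity of token trails is checked place by place.

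First, for \ref{synthA}: Theorem~\ref{thmOnePlace} makes every one-place net $N_r$ well-defined, and it shows that for each $\nu$ the restriction $r|_{C_\nu}$ is a valid token trail for the place of $N_r$. Definition~\ref{defTokenTrails} requires a valid token trail for each place separately. Hence, also for the possibly infinite union $N$, every labelled net of $S$ has a valid token trail for every place of $N$. This is exactly Corollary~\ref{corNetUnion}, and it does not depend on finiteness. Here we must note that the union is taken over a common transition set $T=\{l(e)\mid e\in E\}$. The only change from the standard definition is that $P$ may be infinite, and enabledness stays well-defined because the condition is per place.

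Second, for \ref{synthB}: let $M=(P_M,T,W_M,m_M)$ be any Petri net with $S\subseteq\mathcal{L}(M)$. We show $\mathcal{L}(N)\subseteq\mathcal{L}(M)$, and we do it one place at a time. For each $p\in P_M$, let $M_p$ be the one-place subnet. Since $\mathcal{L}(M)=\bigcap_{p}\mathcal{L}(M_p)$ by the per-place definition, $S\subseteq\mathcal{L}(M_p)$. Theorem~\ref{thmMostRestrictive} then gives a region $r_p$ whose net satisfies $\mathcal{L}(N_{r_p})\subseteq\mathcal{L}(M_p)$. Since $r_p\in R$, the place of $N_{r_p}$ is a place of $N$, so $\mathcal{L}(N)\subseteq\mathcal{L}(N_{r_p})$. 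Intersecting over all $p$ gives $\mathcal{L}(N)\subseteq\mathcal{L}(M)$. So $N$ is the least net, with respect to net-language inclusion, among all nets that satisfy \ref{synthA}, and this is the sense in which its additional behaviour is minimal.

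The main obstacle is bookkeeping rather than mathematics. The first issue is the transition set. If $M$ has transitions outside the labels occurring in $S$, those transitions play no role for labelled nets over the labels of $S$, so we restrict to $T$. Conversely, labels of $S$ that $M$ lacks would make $S\not\subseteq\mathcal{L}(M)$. The second issue is the identity $\mathcal{L}(M)=\bigcap_p\mathcal{L}(M_p)$. It holds directly from Definition~\ref{defTokenTrails}, because conditions \ref{ttI}--\ref{ttIII} involve only the single place $p$. Both points must be made explicit so that the inclusion chain from Theorem~\ref{thmMostRestrictive} can be lifted from one-place nets to arbitrary nets.
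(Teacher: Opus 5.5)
Your proposal is correct and follows the paper's own argument: \ref{synthA} comes from Corollary~\ref{corNetUnion}, and \ref{synthB} from the observation that the token trails of any place compatible with \ref{synthA} form a region, whose place is already in $N$ and, by Theorem~\ref{thmMostRestrictive}, is at least as restrictive. Your version spells out what the paper's short proof leaves implicit: $R$ must be the set of all regions, the per-place identity $\mathcal{L}(M)=\bigcap_p\mathcal{L}(M_p)$ holds, and the transition sets must be aligned; you also phrase \ref{synthB} as $N$ being least under net-language inclusion among all nets satisfying \ref{synthA}, rather than as ``adding any further place to $N$ is unnecessary.''
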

\begin{proof}
Assume we have the possibly infinite union net $N$ of the one-place Petri nets $N_r$ and a place $p$, not yet in $N$. If adding $p$ to $N$ would still ensure \ref{synthA}, there is a valid token trail for $p$ in $S$. This token trail defines a region and thus, a related place $p'$ is already in $N$. There is no need to add $p$ because of Theorem \ref{thmMostRestrictive}.
\end{proof}

In praxis, to solve the synthesis problem, we need to construct a finite Petri net that has the same behaviour as the union Petri net. Upon reviewing the literature related to region theory, there are various approaches to solving this final step. For most state-based regions, when constructing Petri nets without arc weights, the number of regions is simply finite. If we consider state-based regions and Petri nets with arc weights, we typically attempt to solve the synthesis problem under the additional constraint that no extra behaviour is introduced. If such a result exists, the state space of the resulting net is the specification, and thus, only minimal regions need to be added. In the case of language-based regions, we aim to achieve the best upper approximation of the specified behaviour. Therefore, even if a region is larger in terms of specified behaviour, it may be smaller regarding unspecified behaviour. Thus, minimal regions might not be sufficient. Here, we calculate the set of wrong continuations, which is the boundary between unspecified and specified behaviour. For partial order-based specifications, this set of wrong continuations is finite, allowing us to determine whether there is an exact solution to the synthesis problem or not. If possible, excluding all wrong continuations while still ensuring \ref{synthAabstract} solves the synthesis problem.

In this paper, we define regions for a set of labelled nets. Thus, the behaviour is specified in terms of the net language of a Petri net. A net language is always infinite. Adding superfluous additional places, that don’t restrict the behaviour or simply copying a net and putting the copy in conflict to the original net, is, and should be, always possible. Thus, because a specification is always finite, and every result will have an infinite net language, technically there is never an exact solution.
Altogether, the number of regions is infinite because of arc weights, and the number of wrong continuations is infinite because it’s a net language. Thus, still the saturated net having places related to all regions is a solution, but a finite subnet might not be. 
Although, this sounds like bad news, we are not really interested in the complete, infinite net language of some Petri net. We want to model behaviour in terms of state graphs, partial languages and labelled nets and have a proper synthesis result that can simulate the specification but still falls into some reasonable net class. In this paper, we show that a very practical simplification will lead to very good results. As done for state-based regions, we simply restrict the size of every local marking of a region by some number $k$. Thus, the number of regions is again finite. In this procedure, every solution ensures \ref{synthA}, and there is no $k$-bounded solution having less behaviour. Additional behaviour decreases as we increase $k$.

\section{Token Trail Regions as Meta Regions}

In this section, we show that token trail regions form a meta theory for state-based and language-based regions.

It is easy to translate a state graph into a labelled net. Each state becomes a place, and each state-transition becomes a labelled transition. The place corresponding to the initial state carries one token in the initial marking. By construction, the reachability graph of the resulting net is the original state graph. The first net in Figure \ref{netSpec} shows such a labelled net. Its reachability graph is the net itself.

\begin{theorem}
\label{thmStateRegions}
Let $S = (R,T,X,i)$ be a state graph and $r$ be a region in $S$. We construct a labelled net $N = (C,E,F,m_0,l)$ by $C = R$, $E = X$, $F = \sum_{(s,t,s') \in X} (s,(s,t,s')) + ((s,t,s'),s')$, $m_0 = i$, and $l$ maps every $(s,t,s')$ to $t$. $r$ is a token trail region in $(C,E,F,m_0,l)$.
\end{theorem}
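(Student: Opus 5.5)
We take ``region in $S$'' in the classical state-based sense: a function $r: R \to \mathbb{N}$ such that for every label $t \in T$ there is an integer $\Delta_t$ with $r(s') - r(s) = \Delta_t$ for every state-transition $(s,t,s') \in X$. In the variant with arc weights one additionally requires $r(s) \geq pre(t)$ together with $r(s') = r(s) - pre(t) + post(t)$, so $\Delta_t = post(t) - pre(t)$. In either formulation, the only property we need is that the change of $r$ along an arc depends on the label alone. The plan is to show that this property is exactly condition \ref{regionIV} of Definition \ref{defRegion} once the state graph is translated into a labelled net, and that condition \ref{regionV} is vacuous here.

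\emph{Step 1: $r$ is a marking of the translated net.} Since $C = R$, the function $r$ is a marking $r : C \to \mathbb{N}$ of $(C,E,F,m_0,l)$. We also read $m_0 = i$ as the marking that puts one token on the initial state and none elsewhere. So $r$ is a candidate token trail region.

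\emph{Step 2: inflow and outflow of each labelled transition.} Fix $e = (s,t,s') \in E$. By the definition of $F$, the preset of $e$ is exactly $\{s\}$ and its postset is exactly $\{s'\}$, each with weight $1$. A loop $s = s'$ still contributes one arc in each direction. Hence $in_r(e) = r(s)$ and $out_r(e) = r(s')$, so the rise of $e$ is $r(s') - r(s) = \Delta_t$.

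\emph{Step 3: the two region conditions.} For two transitions $e, e'$ with $l(e) = l(e') = t$, both rises equal $\Delta_t$, which gives \ref{regionIV}. The specification consists of the single labelled net $N$, so \ref{regionV} only compares the initial sum of $N$ with itself and holds trivially. That initial sum is $\sum_c m_0(c) \cdot r(c) = r(i)$, which matches the initial marking of the place defined by $r$ in the state-based setting. Hence $r$ is a token trail region in $(C,E,F,m_0,l)$.

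The main obstacle is not computational but definitional: the paper has not yet stated the state-based region definition. I would therefore fix it explicitly as above, in the general form with arc weights. I would also note that the argument uses only the ``same label, same change'' property, so it covers the pure (weightless) variant as well.
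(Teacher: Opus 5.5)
Your proof is correct and follows the same route as the paper: each transition $(s,t,s')$ has only $s$ in its preset and only $s'$ in its postset, so its rise is $r(s') - r(s)$, which depends only on the label, and \ref{regionV} holds trivially because there is only one net. The one difference is that you treat the region as a multiset with a uniform change $\Delta_t$ per label, where the paper does an enter/exit/no-cross case analysis on a set of states. Your version therefore already covers the arc-weighted extension that the paper only mentions informally right after the theorem.
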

\begin{proof}
$r$ is a region in $S$ so that $r$ is a subset of $R$ so that every equally labelled state-transition either enters, exists, or does not cross $r$. If a state-transition $(s,t,s')$ exits $r$, $s \in r$ and $s' \not \in r$ hold. In $N$, $r$ is a marking, $s$ and $s'$ are places. $r$ marks $s$ with a single token and $s'$ with none. In $N$, $(s,t,s')$ is a transition connected only to $s$ and $s'$. The rise of $(s,t,s')$ in $r$ is $-1$. Using the same arguments, if a state-transition $(s,t,s')$ enters $r$, the rise of $(s,t,s')$ in $r$ is $1$. If a state-transition $(s,t,s')$ is non-crossing, the rise of transition $(s,t,s')$ in $r$ is $0$. Equally labelled state-transitions have the same crossing behaviour in $S$, so that equally labelled transitions have the same rise in $N$, i.e. \ref{regionIV} holds. \ref{regionV} holds, because we only consider one net with an initial sum of tokens $1$.
\end{proof}

Obviously, we can extend Theorem \ref{thmStateRegions} to state-based regions with arc weights. Here, regions are multisets of states, which again perfectly map to markings of a token trail region. The left side of Figure \ref{stateRegion} shows the input to a state-based synthesis algorithm and a region using a multiset of states. Lightly shaded states are in the region once, while darkly shaded states are in the region twice. \textit{reg} enters the region once, \textit{stop} exits the region once, and all other state-transitions do not cross. On the right, there is the related labelled net with the corresponding token trail region. It is simply the same concept.
Assume we have a labelled net, which directly models a state graph, and a token trail region. Using the same relation as defined in Theorem \ref{thmStateRegions}, highlighted in Figure \ref{stateRegion}, the token trail region in the labelled net corresponds to a region of the state graph. Considering state graphs, every region is a token trail region and vice versa.

\begin{figure}[h]
    \centering
    \includegraphics[width=\linewidth]{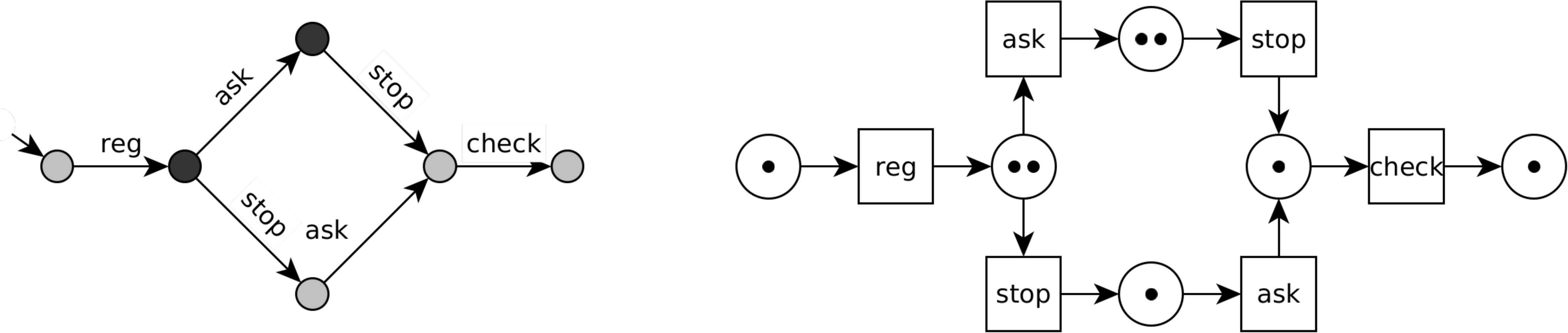}
    \caption{A state-based region and an identical token trail region.}
    \label{stateRegion}
    \vspace{-5mm}
\end{figure}

In Theorem \ref{thmLanguageRegions}, we obtain the same result for language-based regions. Here, we make use of the one-to-one correspondence between token trails and token flows on partial languages. This result was already introduced in \cite{DBLP:conf/apn/BergenthumFK23}. We repeat the theorem to also demonstrate that both related region definitions are equivalent.

\begin{theorem}
\label{thmLanguageRegions}
Let $S = \{(V_1,\ll_1,l_1),\ldots,(V_n,\ll_n,l_n)\}$ be a set of labelled partial orders and $r$ be a compact token flow region in $S$. We construct a set of labelled nets $N = \{(C_1,E_1,F_1,m_1,l_1), \ldots$, $ (C_n,E_n,F_n,m_n,l_n)\}$ by $C_i = (\{\blacktriangleright\}\times V_i)\cup{\ll_i} \cup (V_i \times\{{\scriptstyle\blacksquare}\})$, $E_i = V_i$, $F_i = \sum_{(v,v')\in {\ll_i}}(v,(v,v')) + ((v,v'),v') + \sum_{v \in V_i} ((\blacktriangleright,v),v) + (v,(v,{\scriptstyle\blacksquare}))$, and $m_i = \sum_{v \in V_i}(\blacktriangleright,v)$. $r$ is a token trail region in $N$.
\end{theorem}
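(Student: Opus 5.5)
The plan is to unfold what a compact token flow region is and check that, under the identification of arcs of the partial order with places of the constructed labelled nets, it satisfies \ref{regionIV} and \ref{regionV}. Following the language-based literature, I take a compact token flow region $r$ in $S$ to be a function assigning a natural number to every element of $(\{\blacktriangleright\}\times V_\nu)\cup{\ll_\nu}\cup(V_\nu\times\{{\scriptstyle\blacksquare}\})$ for every $\nu$. It satisfies two conditions. First, equally labelled events (also across different partial orders) have the same difference $out_r(v)-in_r(v)$, defined as in Definition~\ref{defTokenFlows}. Second, every partial order distributes the same initial sum $\sum_{v\in V_\nu} r(\blacktriangleright,v)$. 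Since $C_\nu$ is exactly this domain, $r$ is literally a marking $r:\bigcup_\nu C_\nu\to\mathbb{N}$. So the candidate token trail region is $r$ itself, with no translation needed.

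The first step is to compute inflow and outflow in the labelled net. Fix $\nu$ and $v\in E_\nu=V_\nu$. By the definition of $F_\nu$, the places in the preset of $v$ are exactly $(\blacktriangleright,v)$ and the arcs $(v',v)\in{\ll_\nu}$, each with weight $1$. The places in the postset of $v$ are exactly the arcs $(v,v')\in{\ll_\nu}$ and $(v,{\scriptstyle\blacksquare})$, again with weight $1$. I will check that no other arc of $F_\nu$ touches $v$. This holds because the sum defining $F_\nu$ only produces arcs between an event and a place containing that event as a component, and in the correct direction. Hence the token trail inflow $\sum_{(c,v)\in F_\nu}F_\nu(c,v)\cdot r(c)$ equals $r(\blacktriangleright,v)+\sum_{v'\ll_\nu v}r(v',v)$, which is the token flow inflow. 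The outflows coincide in the same way. Consequently the rise of $v$ in the labelled net equals the flow difference of $v$ in the partial order. Because the labelling is unchanged ($l_\nu$ on $E_\nu=V_\nu$), the same-label-same-difference condition of the compact token flow region gives \ref{regionIV} directly.

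The second step is the initial sum. Since $m_\nu=\sum_{v\in V_\nu}(\blacktriangleright,v)$ marks each start place with exactly one token and all others with none, the initial sum of tokens is $\sum_{c\in C_\nu}m_\nu(c)\cdot r(c)=\sum_{v\in V_\nu}r(\blacktriangleright,v)$. This is precisely the initial-marking quantity of the token flow region. Equality across all $\nu,\mu$ then yields \ref{regionV}.

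The main obstacle is not computational but definitional. The paper has not yet formally stated what a compact token flow region is. I would therefore fix the definition explicitly, as above, modelled on conditions \ref{ctfii} and \ref{ctfiii} with the place-dependent terms replaced by a label-dependent rise. I also need to take care that $\ll_i$ may be either a partial order or just a run relation; the construction works verbatim for any relation. I would also remark that the argument is reversible: every token trail region on these nets is a compact token flow region. This gives the claimed equivalence of the two region definitions.
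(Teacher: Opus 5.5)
Your proposal is correct and follows essentially the same route as the paper. You identify the domain of the compact token flow with the places $C_\nu$ and note that all arc weights are $1$. You observe that the added places $(\blacktriangleright,v)$ and $(v,{\scriptstyle\blacksquare})$ reproduce the initial and final flows, so inflow, outflow, rise and initial sum coincide and \ref{regionIV} and \ref{regionV} transfer directly. Where the paper only asserts these steps, you spell out the otherwise undefined notion of a compact token flow region and check the presets and postsets of each event.
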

\begin{proof}
We prove the one-to-one relation between token flows and token trails on partial languages. The only difference between \ref{ctfi}, \ref{ctfii}, \ref{ctfiii} and \ref{ttI}, \ref{ttII}, \ref{ttIII} is the definition of $in$  and $out$. In Definition \ref{defTokenTrails}, the sums are weighted by the corresponding arc weights. However, we do not need arc weights to model a labelled partial order as a labelled net. In this case, every weight is $1$. In Definition \ref{defTokenFlows} initial and superfluous tokens can be consumed and produced at every event. In this theorem, we add matching places to the preset and the postset of every labelled partial order. Thus, $r$ is a token trail region for $S$. 
\end{proof}

\begin{figure}[h]
    \centering
    \includegraphics[width=\linewidth]{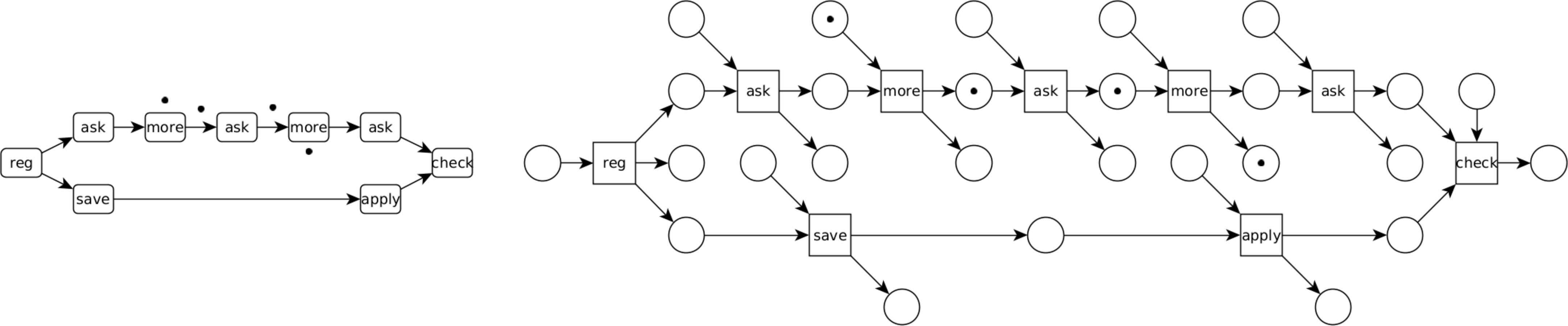}
    \caption{A language-based region and an identical token trail region.}
    \label{languageRegion}
\end{figure}

Figure \ref{languageRegion} depicts the relation between a valid compact token flow and a valid token trail on a partially ordered run. It is easy to see, that we could remove a lot of the additional places modelling the initial and final flow. The net is a partial order, so that tokens in the initial marking could spawn in the minimal places and be passed to every event. Similarly, superfluous tokens can be passed to the maximal places. Thus, if there is a token trail on such an extended net, there is also a token trail on the net only modelling the partial order adding minimal and maximal places for the initial and final marking. See the second net of Figure \ref{region} as an example.

Assume we have a set of labelled nets, which directly model a partial language, and a token trail region. Using the same relation as defined in Theorem \ref{thmLanguageRegions}, highlighted in Figure \ref{languageRegion}, the token trail region in the labelled net defines a region of the partial language. Considering partial languages, every region is a token trail region and vice versa.

Concluding this section, as conjectured in \cite{DBLP:conf/ataed/BergenthumK22}, token trail regions are a meta region definition. Instead of state-based regions, we can simply use token trail regions. The same applies to partially ordered regions. Furthermore, we can use minimal regions or wrong continuations to obtain a finite result. However, the goal of this paper is not only to unify both concepts under a meta region definition, but also to introduce a new method for practical applications. By using the new definition, we are not only able to combine state-based and language-based input, but we can also incorporate labelled nets into our specification. These labelled nets can feature distributed initial markings, loops, and even arc weights. In the following, we present an approach for calculating a set of minimal regions up to a bound of $k$, in order to address the synthesis problem from a set of labelled nets.

\section{Computing Token Trail Regions}
\label{sec4}
In this section, we introduce a method to calculate all minimal token trail regions up to a given bound $k$. We present an implementation, a web-tool, and provide examples of the synthesis results to demonstrate that the synthesis is indeed applicable in practical scenarios.

Typically, state-based regions are constructed using a bottom-up approach. Starting from a set of so-called excitation regions, these regions are gradually repaired by incrementally adding more states until they form a set of valid regions. The bottom-up approach generates a set of minimal regions. A similar process could be applied to token trail regions. However, the bottom-up approach can become somewhat unwieldy due to the concurrency in the labelled nets, which introduces many  alternatives when repairing regions.

Most synthesis algorithms based on languages use Integer Linear Programming (ILP) combined with the wrong continuation approach to identify the set of possible regions. For example, see \cite{DBLP:conf/apn/vanDerWerfDHS08} for the most prominent algorithm that applies language-based regions, ILP, and a heuristic based on the directly-follows relation of the input to generate a finite set of regions. The Kokosminer, based on the Master thesis of Karl Heinrichmeyer \cite{heinrichmeyer20}, directly calculates a set of minimal regions with an ILP. In our implementation, we generalize Karl’s approach to calculate token trail regions.

An ILP is easy to implement using specialized solvers and is highly flexible. It allows for intuitive implementation of restrictions on the type of regions we aim for, such as one-bounded regions, or the type of net class we are targeting, for example, nets with or without arc weights and/or short loops. For these reasons, we combine the two best proven techniques and introduce a new ILP approach for calculating minimal regions in this paper.

\newpage
To calculate regions, formally a multiset of places, we implement the conditions of Definition \ref{defRegion} as an ILP. Every place of the specification is a variable of the ILP. For every label of the specification, we fix one transition with the related label. For every other transition carrying the same label, we introduce one equality to the ILP ensuring the rise of both transitions is equal. Let’s consider Figure \ref{netSpec} as an example. The specification has $35$ places, thus, the ILP has $35$ variables. There are four transitions labelled \textit{reg}. Let us call them \textit{reg\textsubscript{1}}, \textit{reg\textsubscript{2}}, \textit{reg\textsubscript{3}}, and \textit{reg\textsubscript{4}}. We add three equalities to the ILP. One ensures that the rise of \textit{reg\textsubscript{1}} is equal to the rise of \textit{reg\textsubscript{2}}. Another, ensuring that the rise of \textit{reg\textsubscript{1}} is equal to the rise of \textit{reg\textsubscript{3}}. And one more, ensuring the rise of \textit{reg\textsubscript{1}} is equal to the rise of \textit{reg\textsubscript{4}}. There are eight transitions labelled \textit{ask}, thus, we add seven more equalities. Altogether we add these equalities for every label. They ensure that every solution of the ILP satisfies \ref{regionIV}.

In the next step, we add equalities to ensure \ref{regionV}. Again, we simply refer to all labelled nets of the specification as \textit{net\textsubscript{1}}, \textit{net\textsubscript{2}}, \dots , \textit{net\textsubscript{n}}. We construct one equality such that the sum of the initial tokens of \textit{net\textsubscript{1}} and \textit{net\textsubscript{2}} is equal. Another equality ensures that the sum of the initial tokens of \textit{net\textsubscript{1}} and \textit{net\textsubscript{3}} is equal, and so on. They ensure that every solution of the ILP satisfies \ref{regionV}.

We add a constraint to each variable, ensuring that it is less than or equal to $k$, in order to calculate regions only up to $k$.

Altogether, every non-negative integer solution of this region ILP is a token trail region up to $k$, and vice versa. This construction is really kind of straight forward, because Definition \ref{defRegion} is an ILP already.

In the next step, we search for a non-empty minimal region. We add the inequality that the sum of all variables is greater than $0$, and we also add the same sum, i.e., the sum of all variables, as the objective function to be minimized by the ILP. Now, every optimal solution corresponds to a minimal region, because there is no other region that has fewer or an equal number of tokens in every place, the objective function of such region would be smaller.

With the first minimal region, we build a transition for every label, along with arcs and a place, as defined by Theorem \ref{thmOnePlace}. Now, we have to find other minimal regions and related places.

Every region we find is a solution $(s_1, \ldots, s_n)$ assigning values to the variables $p_1, \ldots, p_n$ of the ILP. With every new solution, we extend the ILP. Let’s assume we just found a minimal region $r_1 = s_1 \cdot p_1 + \ldots + s_n \cdot p_n$. For every multiplicity $s_i > 0$ of $r_1$, we add a variable $x_{r1pi}$ to the ILP. The idea is to add inequalities to the ILP so that $x_{r1pi}$ will be $1$ if the variable $p_i$ is less than $s_i$, else $0$. Therefore we add the following inequalities:

\begin{centering}
\vspace{2mm}
$0 \leq (p_1 - s_1) + k \cdot x_{r1p1} \leq k - 1$

$\ldots$

$0 \leq (p_n - s_n) + k \cdot x_{r1pn} \leq k - 1$

\vspace{2mm}
\end{centering}

If variable $p_i$ is smaller than the previous solution $s_i$, the term $(p_i - s_i)$ is less than 0, so that $x_{r1pi}$ must be $1$. If $p_i$ is at least $s_i$, the term $(p_i - s_i)$ is non-negative and $x_{r1pi}$ must be $0$. Finally, we add $x_{r1p1} + \ldots + x_{r1pn} > 0$ to our ILP to ensure, that every new solution will be smaller in at least one component.

By iteratively building, solving, and updating the ILP, we can calculate the set of minimal regions up to $k$ for our specification, and subsequently construct the related Petri net.

This synthesis approach is implemented as a new module of the I \emojiHeart Petri nets website. The \emojiHorse module, available at \url{www.fernuni-hagen.de/ilovepetrinets/horse}, implements the algorithm introduced in this chapter by solving a sequence of ILPs using the GLPK-Solver \cite{glpk,glpk.js}. For this contribution, we prepared four examples that the reader can try to solve using our new region approach.

We already mentioned Examples 1 to 3 in the introduction. Example 1, the net based-specification of Figure \ref{netSpec}, synthesises Figure \ref{petriNet} in $30$ seconds. Example 2, the state-based specification of Figure \ref{stateGraph}, synthesises Figure \ref{petriNet}, without $p_5$, instantly. State-based input cannot specify the short loop. Example 3, four runs of the language-based specification of Figure \ref{pos}, synthesises Figure \ref{petriNet} in $6$ minutes. Language-based regions can't handle all six in reasonable time. Example $4$ highlights another application of the token trail approach. Obviously, every Petri net is also a labelled net. Thus, the input to the synthesis can be a fully-fledged process model. In Example $4$, we input two general labelled nets, one with a distributed initial marking, the other with an arc weight. Here we synthesise Figure \ref{petriNet} in no-time. This is an exciting feature of token trail regions. We can input sequences, state graphs, partially ordered runs, workflow nets, and complete system models to synthesise a Petri net that simulates the specified behaviour. 

We encourage the reader to try more examples. If we delete the longest run in Example 1, where \textit{more} occurs twice, the result will have no arc weights. If we input an unlabeled net, the result will be identical to the input. We can add the two nets from Example 4 to Example 1 without changing the result. Inputs can be uploaded using the \textit{PNML} standard or created with an editor available at the I \emojiHeart Petri Nets website, available at \url{www.fernuni-hagen.de/ilovepetrinets/}.

\section{Process Discovery using Token Trail Regions}

In the previous sections, we introduced token trail regions as a unifying region theory for synthesizing a Petri net from a behavioural specification given in terms of a set of labelled nets. While the focus so far has been on the theoretical foundations and the synthesis problem itself, we now turn to applications. In particular, we discuss process discovery, which is the most prominent practical application of region-based synthesis techniques.

Process discovery \cite{DBLP:journals/topnoc/AalstD13,DBLP:books/sp/Aalst16,DBLP:books/sp/22/PMH2022} aims at automatically generating a process model from observed example behaviour. In most practical settings, such example behaviour is recorded in the form of an event log. An event log consists of a multiset of traces, where each trace is a finite sequence of activity labels describing one observed execution of the process. Depending on the level of abstraction, additional information such as timestamps, resources, or data attributes may be available, but classical discovery focuses on the control-flow perspective.

Among the region-based discovery approaches, the ILP Miner \cite{DBLP:conf/apn/vanDerWerfDHS08} is the most prominent representative. It applies language-based regions to the prefix-closed language induced by the event log and formulates the synthesis problem as an integer linear program in order to compute places of a workflow net. Region-based discovery algorithms such as \cite{DBLP:conf/bpm/CarmonaCK08,DBLP:conf/icpm/Bergenthum19, DBLP:conf/apn/FolzWeinsteinBDK23,DBLP:conf/caise/FolzWeinsteinRMBA25} are known to produce precise and well-fitting models, in contrast to heuristic approaches that often trade precision for performance. However, these advantages come at a cost. The runtime of synthesis-based discovery algorithms can be high, and the quality of the result heavily depends on the quality of the input log.

From a synthesis perspective, region-based techniques always assume that the observed behaviour is sufficiently complete and correct. That is, the input behaviour is implicitly treated as a specification rather than as a noisy or incomplete sample. If relevant behaviour is missing from the log, or if undesired behaviour is present due to noise, region-based synthesis tends to either overgeneralise or fail to produce a meaningful model. This observation also applies to our contribution. Token trail regions are particularly well suited when the observed behaviour can be considered a specification of the intended system behaviour.

Token trail regions provide a natural generalisation of region-based process discovery. Instead of restricting the input to sequences or partial orders derived from an event log, we allow the specification to be given as a set of labelled nets. This enables several extensions of classical discovery scenarios.

First, classical event logs can be handled directly. Every trace of an event log can be translated into a simple sequential labelled net. Alternatively, traces can be grouped and folded into labelled nets that share common prefixes, suffixes, or local states. This can significantly reduce the size of the input specification, thereby improving runtime.

Second, token trail regions allow us to incorporate additional behavioural knowledge beyond raw event data. For example, concurrency relations derived from timestamps, domain knowledge, or prior analysis can be encoded explicitly using labelled nets. Similarly, known conflicts, loops, or synchronisation patterns can be specified using small labelled nets and added to the specification. This is particularly useful in process discovery settings where logs are incomplete, but partial models or constraints are available.

Third, token trail regions support the combination of heterogeneous sources of behavioural information. A specification may contain traces extracted from an event log, partial orders obtained from concurrency analysis, state-based fragments derived from legacy systems, and even existing process models. All these artefacts can be represented uniformly as labelled nets and jointly processed by the synthesis algorithm. The resulting Petri net simulates all provided inputs and therefore integrates the different perspectives into a single model.

From a computational point of view, the ILP-based computation of token trail regions closely resembles existing discovery algorithms. Bounding the local markings of regions by a parameter $k$ yields a finite and controllable search space. As in classical region-based discovery, increasing $k$ reduces additional behaviour, while smaller values of $k$ lead to more general models. This provides a natural mechanism to balance precision and generalisation.

In summary, token trail regions extend region-based process discovery in a conservative but powerful way. They subsume classical language-based discovery techniques, while enabling richer specifications that go beyond event logs. This makes token trail regions particularly suitable for discovery scenarios in which observed behaviour is complemented by structural knowledge, partial models, or manually specified behavioural fragments.

An additional advantage of token trail regions in the context of process discovery is their ability to eliminate label splitting in discovered process models. Several discovery algorithms, such as Alpha++ \cite{DBLP:journals/dmkd/WenWWWS07} and Split Miner \cite{DBLP:journals/tkde/AugustoCDR20}, explicitly allow label splitting in order to represent different behavioural contexts of activities. While label splitting can significantly improve the precision and structural clarity of discovered models, it comes at the cost of increased complexity. In particular, analysis techniques such as conformance checking, performance analyses, or further synthesis steps often suffer from higher runtime when applied to labelled nets. The reason is that there is no longer a one-to-one correspondence between paths in the process model and firing sequences. As a consequence, many analysis techniques have to operate on the reachability graph of the process model in such applications.

Using token trail regions, label splitting can be treated as a modelling convenience rather than a permanent feature of the resulting model. Behaviour that is discovered using labelled nets can be seen as a specification itself. Token trail synthesis ensures, that the behavioural distinctions introduced by label splitting are preserved semantically through places and arc weights, rather than through duplicated labels. In this way, token trail regions allow us to use labelled nets as an expressive intermediate representation during discovery, while still producing compact and efficiently analysable Petri net models as synthesis results.

Another interesting feature of the approach presented in this contribution is that token trail regions are computed using an ILP that explicitly enumerates a set of minimal regions. This allows us to control each place that is added to the synthesis result. For example, in the context of process discovery, we are often interested in synthesising workflow nets. A workflow net is a Petri net with a distinguished initial place~$i$ and a final place~$o$ such that every node of the net lies on a directed path from~$i$ to~$o$. For workflow nets, an additional desirable property is soundness~\cite{DBLP:journals/fac/AalstVHVSV11}. A workflow net is sound if, starting from the initial marking, it is always possible to eventually reach the final marking. Roughly speaking, a sound workflow net starts with a single token in~$i$, and an execution is complete if it reaches a single token in~$o$. Consequently, all inner places of a workflow net should be empty both at the start and at the end of every execution.

To tailor token trail synthesis towards the discovery of sound workflow nets, we first restrict our specifications to labelled workflow nets. For example, when translating an event log into a set of sequentially ordered labelled nets, each such trace net already has a unique initial place and a unique final place. In the same way, we can require that every partial order or labelled net included in the specification provides exactly one initial place and one final place. When computing regions for the inner places of the intended workflow net result, we require that the token trail of a region does not mark the final place of any labelled net in the specification. This yields a simple additional inequality for each input net, ensuring that for every added region the corresponding place is empty after every specified run. Consequently, executing any specified run will never leave a token in the synthesis result. Such places do not obstruct proper termination and therefore do not prevent the resulting net from being sound.

It is important to note that this restriction does not guarantee soundness of the synthesised net, as it does not ensure that the result is connected or free of deadlocks. However, it provides an effective mechanism to steer the synthesis towards the intended net class. By excluding places that may carry tokens at the end of the observed behaviour, the resulting model generalises the specification while remaining closer to sound workflow nets. 

As an illustration, the \emojiHorse\ tool provides an additional option that activates this ILP extension. By switching the mode from synthesis to discovery, the corresponding empty at the end constraints are enabled. Applying this setting to \textit{Example~1} yields the Petri net of Figure~\ref{petriNet}, but without the places $p_3$ and $p_5$. All four labelled nets of the specification are workflow nets. Therefore, we can easily add the constraint that the final place of each input net must be unmarked for every computed region. Place $p_3$ counts loop iterations and is therefore not empty after the first three runs of Figure~\ref{netSpec}. Place $p_5$ is empty only if transition \textit{stop} has occurred. This example demonstrates how token trail regions can be used to guide process discovery towards workflow nets with stronger termination properties.

With this technique, we obtain a Petri net that can simulate the specified behaviour, but that does not restrict it maximally. In the given example, the resulting net no longer represents the observed behaviour with the same precision as the unrestricted synthesis result. However, in a setting where the input consists only of a snippet of the complete behaviour, this yields a valid and sound solution that completes the observed behaviour. This illustrates a fundamental trade-off between behavioural precision and structural guarantees. Depending on the application scenario, token trail regions allow us to systematically steer the synthesis process. By strengthening or relaxing constraints in the ILP, we can move along a spectrum between behaviour-preserving synthesis and more generalising discovery, thereby tailoring the result towards the desired net class and analysis objectives.

\section{Conclusion}

In this paper, we introduced token trail regions as a new region definition for Petri net synthesis from behavioural specifications given in terms of labelled Petri nets. Token trail regions are based on the token trail semantics and define regions as label-respecting distributions of local states across a set of labelled nets. This allows us to synthesize Petri nets that simulate the specified behaviour while introducing as little additional behaviour as possible.

A central contribution of this work is the unification of state-based and language-based region theory. We formally showed that token trail regions subsume both classical state-based regions and language-based regions, and therefore form a meta region theory. As a consequence, behavioural specifications given as state graphs, languages, partial orders, or combinations thereof can be handled uniformly. Moreover, token trail regions extend this unification by allowing labelled Petri nets as first-class specification artefacts. This enables the explicit modelling of conflict, concurrency, and merging of local states within a single specification, without forcing the modeller to commit to one specific behavioural representation in advance.

From a modelling perspective, labelled nets provide a natural and intuitive way to specify behaviour. They allow modellers to decompose complex behaviour into smaller fragments, to use label splitting to unfold behaviour where necessary, and to combine different modelling styles depending on the nature of the available information. In contrast to more expressive behavioural formalisms such as branching processes or event structures, labelled nets do not introduce a new modelling language, but build directly on well-established Petri net concepts. Token trail regions preserve this modelling convenience while providing a rigorous synthesis semantics.

On the synthesis side, we showed that every token trail region defines a valid place in the synthesis result and that no additional places beyond those induced by regions are required. This result generalises classical region-based synthesis and ensures that synthesis can always be reduced to the computation of regions. Although the set of all regions is infinite in general, we demonstrated that restricting the size of local markings by a bound~$k$ yields a finite and practically manageable approximation. Increasing~$k$ monotonically reduces additional behaviour, providing a principled way to balance precision and generalisation.

We presented an ILP-based approach for computing minimal token trail regions up to a given bound and implemented this approach in a web-based tool. Our experimental results indicate that, despite the inherent complexity of region-based synthesis, the approach is practically feasible and performs well on a range of examples. In particular, token trail regions allow us to combine heterogeneous specifications, including state-based input, language-based input, workflow nets, and complete process models, into a single synthesis task.

We further demonstrated that token trail regions naturally support process discovery scenarios. By interpreting observed behaviour as a specification rather than a noisy sample, token trail synthesis provides precise discovery results when sufficient information is available. At the same time, the ILP formulation allows the synthesis process to be steered towards specific net classes, such as workflow nets with termination guarantees. This makes token trail regions well suited for discovery settings in which precision, soundness, and analysability must be carefully balanced.

Future work will focus on several directions. On the theoretical side, we plan to further investigate formal guarantees for the synthesis results, in particular with respect to soundness and behavioural minimality under additional constraints. On the algorithmic side, we aim to explore optimisation techniques for the ILP formulation and incremental approaches for handling larger specifications. From an application perspective, we intend to study the integration of token trail regions with noise-handling and abstraction techniques from process mining, in order to relax the assumption of complete and correct specifications. Altogether, token trail regions provide a flexible and unifying foundation for Petri net synthesis, bridging theory and practice and opening new avenues for modelling, synthesis, and discovery of concurrent systems.

%
%
\bibliographystyle{fundam}
\bibliography{paper}
\end{document}